\pgfplotsset{width=8cm,compat=1.9}
\newcommand\blfootnote[1]{%
  \begingroup
  \renewcommand\thefootnote{}\footnote{#1}%
  \addtocounter{footnote}{-1}%
  \endgroup
}
\newtheorem{prop}{Proposition}
\newtheorem{theorem}{Theorem}
\newtheorem{lemma}{Lemma}
\newtheorem{deftn}{Definition}
\newtheorem{corol}{Corollary}
\begin{document}
%
% paper title
% Titles are generally capitalized except for words such as a, an, and, as,
% at, but, by, for, in, nor, of, on, or, the, to and up, which are usually
% not capitalized unless they are the first or last word of the title.
% Linebreaks \\ can be used within to get better formatting as desired.
% Do not put math or special symbols in the title.
%\title{Mini Slotted Threshold-ALOHA (MiSTA): An Age-Optimal Random Access Policy}
\title{MiSTA: An Age-Optimized Slotted ALOHA Protocol}
% \footnote{This work was supported in part by TUBITAK grants 117E215 and 119C028, and by Huawei.}}
%
%
% author names and IEEE memberships
% note positions of commas and nonbreaking spaces ( ~ ) LaTeX will not break
% a structure at a ~ so this keeps an author's name from being broken across
% two lines.
% use \thanks{} to gain access to the first footnote area
% a separate \thanks must be used for each paragraph as LaTeX2e's \thanks
% was not built to handle multiple paragraphs
%
%\nocite{*}

\author{Mutlu Ahmetoglu, Orhan Tahir Yavascan and Elif Uysal \\
Dept. of Electrical and Electronics Engineering, METU, 06800, Ankara, Turkey \\
  \{mutlu.ahmetoglu,orhan.yavascan,uelif\}@metu.edu.tr
}

% The paper headers
%\markboth{IEEE Journal On Selected Areas in Communications}{}%
%{Shell \MakeLowercase{\textit{et al.}}: Bare Demo of IEEEtran.cls for IEEE Journals}
% The only time the second header will appear is for the odd numbered pages
% after the title page when using the twoside option.
% 
% *** Note that you probably will NOT want to include the author's ***
% *** name in the headers of peer review papers.                   ***
% You can use \ifCLASSOPTIONpeerreview for conditional compilation here if
% you desire.

% If you want to put a publisher's ID mark on the page you can do it like
% this:
%\IEEEpubid{0000--0000/00\$00.00~\copyright~2015 IEEE}
% Remember, if you use this you must call \IEEEpubidadjcol in the second
% column for its text to clear the IEEEpubid mark.

\setlength{\belowdisplayskip}{2pt} \setlength{\belowdisplayshortskip}{2pt}
\setlength{\abovedisplayskip}{2pt} \setlength{\abovedisplayshortskip}{2pt}

% use for special paper notices
%\IEEEspecialpapernotice{(Invited Paper)}

% make the title area
    \vspace*{15pt}
    {\let\newpage\relax\maketitle}
\thispagestyle{empty} 
% As a general rule, do not put math, special symbols or citations
% in the abstract or keywords.
\vspace{-2cm}
\begin{abstract}
\blfootnote{This work was supported in part by TUBITAK grant 117E215, and by Huawei. A preliminary version was reported in the BlackSeaCom 2021 conference.\cite{MutluBlackSeaCom2021}} We introduce Mini Slotted Threshold ALOHA (MiSTA), a slotted ALOHA modification designed to minimize the network-wide time average Age of Information (AoI). In MiSTA, sources whose ages are  below a certain threshold stay silent. When a node with age above the threshold has data to send, it becomes active in the next time frame with a certain probability. The active node first transmits a short control sequence in a mini-slot ahead of actual data transmission, and if collision is sensed, it backs off with a certain probability. We derive the steady state distribution of the number of active sources and analyze its limiting behaviour. We show that MiSTA probabilistically converges to a \enquote{thinned} slotted ALOHA, where the number of active users at steady state adjusts to optimize age. With an optimal selection of parameters, MiSTA achieves an AoI scaling with the number of sources, $n$, as $0.9641n$, which is an improvement over the Threshold ALOHA policy proposed earlier (for which the lowest possible scaling is $1.4169n$). While achieving this reduction in age, MiSTA also increases achievable throughput to approximately $53\%$, from the $37\%$ achievable by Threshold ALOHA and regular slotted ALOHA.
\end{abstract}

% Note that keywords are not normally used for peerreview papers.
\begin{IEEEkeywords}
Slotted ALOHA, Threshold ALOHA, Mini Slots, Age of Information, AoI, threshold policy, random access, stabilized ALOHA
\end{IEEEkeywords}

% For peer review papers, you can put extra information on the cover
% page as needed:
% \ifCLASSOPTIONpeerreview
% \begin{center} \bfseries EDICS Category: 3-BBND \end{center}
% \fi
%
% For peerreview papers, this IEEEtran command inserts a page break and
% creates the second title. It will be ignored for other modes.
\IEEEpeerreviewmaketitle

\vspace{-0.3cm}
\section{Introduction}
	% The very first letter is a 2 line initial drop letter followed
	% by the rest of the first word in caps.
	% 
	% form to use if the first word consists of a single letter:
	% \IEEEPARstart{A}{demo} file is ....
	% 
	% form to use if you need the single drop letter followed by
	% normal text (unknown if ever used by the IEEE):
	% \IEEEPARstart{A}{}demo file is ....
	% 
	% Some journals put the first two words in caps:
	% \IEEEPARstart{T}{his demo} file is ....
	% 
	% Here we have the typical use of a "T" for an initial drop letter
	% and "HIS" in caps to complete the first word.
	%\IEEEPARstart{T}{his} 
\par Proliferation of IoT, autonomous mobility and remote monitoring applications is influencing the redesign of wireless access protocols to better cater for Machine-Type Communications (MTC). The information timeliness requirements in massive deployments of nodes generating short, sporadic data packets are not captured adequately by conventional protocol principles, based on classical performance metrics. The Age of Information (AoI) metric was proposed to more directly capture the timeliness of flows in such applications \cite{Altman2010}. Optimization according to AoI leads to non-conventional (and sometimes counter-intuitive) service and scheduling policies \cite{kaul2012real,sun2017update}. Age optimization has been studied under many network and service models, including random access schemes, leading to suggested innovations at various networking layers~\cite{Yates2021Survey,Mankar,Munari2021,Grybosi,Yu2021,Han2021,YatesKaul5,Inoue,Yates12}. In massive MTC scenarios where nodes are to occasionally send typically short status update packets to a common access point over a shared channel, a simple slotted ALOHA type policy is preferable to scheduled access or even to CSMA-type random access due to the low tolerance for overhead. Accordingly, in recent literature \cite{munari2021information,Saha,Orhan,chen2020age,shirin,doga,jiang2018timely} various slotted ALOHA variants have been studied with a focus on the AoI performance. We now briefly review this recent literature to put the present paper into context.

\par In slotted ALOHA, sources that wish to send data transmit with a certain probability $\tau$ in each slot. The selection of $\tau$ to minimize time average age was studied in \cite{yates2017status}. In \cite{doga} it was proposed to activate the transmission with probability $\tau$ only when a user's age reaches a certain threshold $\gamma$, resulting in a policy termed \enquote{Threshold ALOHA (TA)}. The parameters $\tau$ and $\gamma$ were jointly optimized in \cite{OrhanGlobecom,Orhan}. Of course, this policy relies on the availability of feedback to the sender upon a successful transmission, which it then uses to update its age. The benefit of this one bit feedback per successful transmission was shown to be significant, in \cite{Orhan}: the average AoI scales with the network size $n$ as 1.4169$n$, in contrast to the best achievable age scaling $en$ with slotted ALOHA. Moreover, there is nearly no loss in throughput with respect to slotted ALOHA. 

\par The work reported in \cite{shirin} combines the age-based randomized transmission approach with a more detailed dynamic optimization. The SAT policy proposed therein obtains an age-based thinning that is similar to TA except it uses dynamic thresholds resembling a stabilized slotted ALOHA mechanism: sources make decentralized estimations of their ages and compute an age-based transmission threshold, above which they transmit with constant probability. The asymptotic scaling of average AoI with network size, achieved by SAT is slightly better than that achieved by TA: $en/2\approx1.359n$.

The work to be presented in this paper is also related to the studies in \cite{maatouk2019minimizing} and \cite{kadotaInfo2021} with respect to its use of reservation slots: In \cite{maatouk2019minimizing}, a standard CSMA model with stochastic arrivals was studied for AoI optimization, reaching the conclusion that standard CSMA does not cater well to age performance. The combined analytical and experimental study reported in \cite{kadotaInfo2021} also focused on CSMA with stochastic arrivals to nodes at the beginning of each frame, which is divided into minislots. In \cite{kadotaInfo2021}, nodes that capture the channel use a certain number of mini slots combined. Sources do not track their ages, and thus make age-oblivious decisions. The analytically derived   average AoI achieved by this policy was supported by experimental results obtained on a Software Defined Radio test-bed implementation.

\par The main contributions of our paper are the following:
	\begin{itemize}
	\item We introduce Mini Slotted Threshold ALOHA (MiSTA) as the following modification of Threshold ALOHA: at the beginning of a slot, each active source (that is, each backlogged node with age exceeding $\Gamma$) sends a short beacon signal with probability $\tau_1$. If the source receives a collision feedback at the end of this mini-slot, it will back-off with probability $1-\tau_2$. With probability $\tau_2$, it will go ahead and transmit a data packet during the remainder of the slot.
	\item We derive the steady state distribution of the age vector achieved by MiSTA under the setting that sources generate data \enquote{at will}. We use this solution to compute the steady-state distribution of the number of active users for any given network size, $n$ (Lemma \ref{lemma:truncated}).  
	\item
	We then analyze the asymptotic behaviour of the system in the large networks. In particular, we establish the independence of the number of active sources and the age of an individual source in the limiting case (Corollary \ref{cor:pivot}). This observation is used as the key ingredient in determining the probability of a successful transmission, $q_o$, in steady state.
	\item 
	We show that as the number of sources increases to infinity, the system converges to a \textit{thinned} slotted ALOHA network (i.e. a slotted ALOHA network with fewer nodes) (Theorem \ref{thm:1} and \ref{thm:2}). This resembles the results achieved by Rivest's stabilized slotted ALOHA, or the age-thinning policy introduced in \cite{shirin}.
	\item 
	We derive an expression for the steady-state time average AoI in terms of $n$, $\Gamma$, $\tau_1$ and $\tau_2$. With the optimal choice of parameters, this expression gives the time average AoI of $0.9641n$ for the system (Theorem \ref{thm:ageResult}). The corresponding value is $1.4169n$ for the Threshold ALOHA policy, which implies that MiSTA reduces the time average AoI by almost a third. Meanwhile, MiSTA achieves approx. $53\%$ throughput, compared to the $37\%$ achievable by Threshold ALOHA or plain slotted ALOHA.
	\item
	We show that the time used for minislots results in no loss; on the contrary, MiSTA achieves a net gain in spectral efficiency over TA and ordinary slotted ALOHA. Finally we briefly introduce an extended version of MiSTA, with multiple mini slots per frame. We numerically compare the performance of this policy with the one proposed in \cite{kadotaInfo2021}.
	\end{itemize}
\par The rest of the paper is organized as follows: In Section \ref{sect:systemmodel} the system model is presented. Our proposed policy, Mini Slotted Threshold ALOHA (MiSTA) is defined in Section \ref{sect:minislot}. In Section \ref{sect:steady} the steady state distribution of age under MiSTA for any network size is derived. Section \ref{sect:PivtChn} and Section \ref{sect:rootAnalysis} analyze further possibilities for the asymptotic steady state behaviour of the system. Section \ref{sect:ageCalc} derives age-optimal policy parameters. Section \ref{sect:lowerbound} provides an upper bound for the throughput and a lower bound for the time average AoI attained by MiSTA. Section \ref{sect:spectral} analyzes the spectral efficiency of MiSTA. Section \ref{sect:numerical} provides a numerical study to further illustrate the performance of this policy, and contrast it with  related policies from the literature. We conclude in section \ref{sect:conclusion} with a discussion of possible extensions.   

\section{System Model} \label{sect:systemmodel}
We consider a wireless random access channel with $n$ sources and a single access point (AP). The sources access the AP on a shared channel to send time sensitive status update data to their respective destinations. Any delays from the AP to the destinations is ignored. Time is slotted and all nodes are assumed to be synchronized. Two or more transmissions that occur in the same slot result in a collision where no packet is successfully decoded. We will adopt the \enquote{generate-at-will} model~\cite{SunIT2017} where sources take fresh samples of data when they decide to transmit instead of re-transmitting data that failed to be successfully transmitted earlier. 

We define $A_i[t]$ as the Age of Information (AoI) of source $i\in\{1,\ldots,n\}$ at the time slot $t$. By definition, $A_i[t]$ is the time elapsed since the generation of the most recent successful data transmitted by source $i$. In accordance with the generate-at-will model assumption, all received packets reduce the age of the corresponding flow to $1$. Hence, $A_i[t]$ is equal to the the number of slots that have passed by time $t$ since the latest successful transmission by source $i$, plus 1. Adopting the setting of Threshold ALOHA \cite{Orhan}, we assume one bit feedback per successful transmission, such that the AP informs the successful source in the event of successful decoding. Consequently, the age process evolves as
\begin{equation}
\begin{aligned}
    A_{i}[t] = \left\{
    \begin{array}
    {ll}1,  & \begin{aligned}&\text {src. } i\text { transmits successfully} \\&\text{at} \text{ time slot } t-1  \end{aligned}\\
    A_{i}[t-1]+1, & \text {otherwise }
    \end{array}
    \right.
    \end{aligned}
\end{equation}
The long term average AoI of source $i$ is defined as
\begin{equation} \label{eq:2}
\Delta_{i}=\lim_{T \rightarrow \infty} \frac{1}{T} \sum_{t=0}^{T-1} A_{i}[t]
\end{equation}
whenever the limit exists.
In the analysis of Threshold ALOHA, the system age vector was constructed with ages of individual sources as
\begin{equation}
    \mathbf{A}[t] \triangleq\left\langle A_{1}[t] \quad A_{2}[t] \quad \ldots \quad A_{n}[t]\right\rangle
\end{equation}
In \cite{doga}, it was shown that the evolution of this vector follows a Markov Chain and that a truncated version of this Markov Chain suffices for the analysis of steady-state long term average age in the original model. In this finite state truncated Markov Chain, the source ages are truncated at $\Gamma$ since the sources behaviour does not change after it becomes active. The same truncation argument is valid for the Markov Chain analysis for MiSTA, as will be shown in the next section. Moreover, the Markov Chain is ergodic, and using the symmetry between the sources the time average AoI in the network can written as the limiting expected AoI of a single node:
\begin{equation}
    \Delta_i=\lim_{t \to \infty} \mathbb{E} \left[A_{i}[t]\right]
\end{equation}

\section{Mini Slotted Threshold ALOHA}  \label{sect:minislot}
\par MiSTA is a modification of Threshold ALOHA where a mini slot is prepended to each slot. The remainder of the slot is of sufficient length for one packet transmission.
%\vspace{0.5cm}
\begin{figure}[ht] 
\centering
\includegraphics{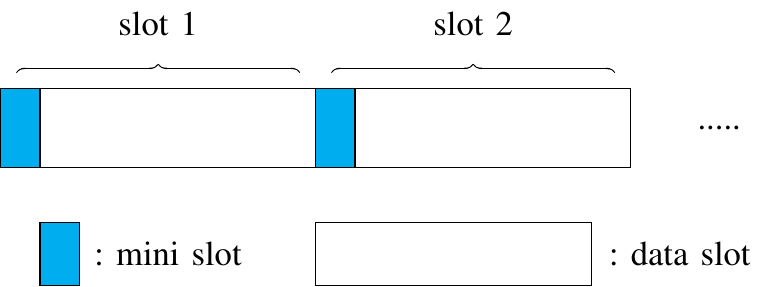}
\caption{The slot structure for Mini Slotted Threshold ALOHA}
\label{fig:slots}
\end{figure}
\par The mini slots are for channel sensing.  Before each mini slot, each active source independently decides to become an attempter with probability $\tau_1$. During the mini slot, attempters make a short transmission (e.g., consisting of a flow ID). If a source receives a success feedback from the AP at the end of the mini slot, it infers that it was the sole attempter and readies a fresh data sample for transmission in the data slot. On the other hand, if it does not receive success feedback for its transmission in the mini slot, it deduces that there were multiple contenders. Each contender independently decides to attempt again and send data in the data slot with probability $\tau_2\leq \tau_1$. Hence, MiSTA, in relation to TA, gives nodes a second chance to avoid collision. 
\subsection{Steady State Distribution of the Markov Chain Model} \label{sect:steady}
In this subsection, we analyze the truncated Markov Chain that captures the evolution of the age vector in a mini slotted environment. Using the steady state probabilities of the recurrent states, we will derive the distribution of the number of active users, $m$. First, we define the truncated system age vector as
\begin{equation}
\mathbf{A}^{\Gamma}[t] \triangleq
\left\langle A_{1}^{\Gamma}[t] \quad A_{2}^{\Gamma}[t] \quad \ldots \quad A_{n}^{\Gamma}[t]\right\rangle
\end{equation}
where $A_{i}^{\Gamma}[t] \in \{1,2,\ldots,\Gamma\}$ is the Aol of source {\it i} at time {\it t} $\in \mathbb{Z}^+$ truncated at $\Gamma$. $A_{i}^{\Gamma}[t]$ is equal to the actual age if the age is below $\Gamma$, and is set equal to $\Gamma$ otherwise. Therefore, the $A_{i}^{\Gamma}[t]$ expression evolves as:
\begin{equation}
\begin{aligned}
A_{i}^{\Gamma}[t+1]=\left\{\begin{array}{ll}1, & \begin{aligned}&\text {src. } i \text { updates } \\&\text{at } \text{time $t-1$, }\end{aligned} \\ \min \left\{A_{i}^{\Gamma}[t]+1,\right.  \Gamma\}, &\text { otherwise. }\end{array}\right.
\end{aligned}
\end{equation}
$\{\textbf{A}^{\Gamma}[t], t\geq 1\}$ is a Markov Chain with a finite state space and a unique steady state distribution, as can be trivially shown following the corresponding argument in \cite{doga}: The truncated system is a homogeneous Markov Chain because the evolution of the system state depends on the past only through the current vector of ages, i.e. the current state. The transition probabilities depend only on the age vector. For a finite network size, $n$, the state space is finite as the values taken by the elements of the state vector are chosen from the set of integers $\{1,2,\ldots,\Gamma\}$. Note that, the state with all entries equal to $\Gamma$ is accessible from any state in a finite number of transitions. From any given state, it takes a finite number of slots without a successful transmission to go to this particular state, so that even the source with the smallest age becomes active. Then, the state with all entries equal to $\Gamma$ is accessible in finite time from all recurrent states including itself. Consequently, the MC forms a single, aperiodic recurrent class, and consequently it has a unique steady state distribution. 
\par The truncated MC model of \cite{doga} was further  studied in \cite{Orhan}. Similar to \cite{Orhan}, we argue that under the MiSTA policy, states with repeating age values smaller than $\Gamma$ (multiple nodes with the same age) are transient. This is because states with a repeating below threshold age value $s$ suggest a simultaneous transmission for these states $s$ slots ago, which is not feasible since only one source can successfully transmit in a slot.
\begin{prop}\cite{Orhan}
\label{prop:recurrentClass}
	For distinct indices i and j, if a state $\left\langle s_{1} \quad s_{2} \quad \ldots \quad s_{n}\right\rangle$ in the truncated MC $\{\textbf{A}^{\Gamma}[t], t\geq 1\}$ is recurrent, then $s_i = s_j$ if and only if $s_i = s_j = \Gamma$.
\end{prop}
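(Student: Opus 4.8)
The plan is to prove the contrapositive of the nontrivial (``only if'') direction: that any state carrying a repeated \emph{below-threshold} entry is transient. The reverse (``if'') direction is immediate, since $s_i = s_j = \Gamma$ trivially yields $s_i = s_j$. So I would fix distinct indices $i,j$ and a putative recurrent state $\mathbf{s}$ with $s_i = s_j = s < \Gamma$, and aim to contradict recurrence.

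The central fact I would establish first is a reconstruction observation: because a source's age is truncated only at $\Gamma$, an age value strictly below $\Gamma$ pins down the exact slot of that source's most recent success. Concretely, if $A_i^{\Gamma}[t] = s < \Gamma$ and $t \ge s+1$, then source $i$ transmitted successfully in slot $t-s$. I would prove this by backward induction on the age recursion: since $s < \Gamma$ rules out truncation and $s \neq 1$ rules out a reset, the ``otherwise'' branch forces $A_i^{\Gamma}[t-1] = s-1$, with all intermediate values $s, s-1, \ldots, 1$ remaining below $\Gamma$; unwinding $s-1$ steps gives $A_i^{\Gamma}[t-s+1] = 1$, which (as $t-s \ge 1$) can only arise from a successful transmission by source $i$ in slot $t-s$.

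Applying this observation to both coordinates, whenever $\mathbf{s}$ is occupied at any time $t \ge s+1$, both source $i$ and source $j$ must have transmitted successfully in the \emph{single} slot $t-s$. This contradicts the defining property of the model, which is inherited by MiSTA: at most one source decodes successfully per slot, so at most one age can reset in any slot. Hence $\mathbf{s}$ can be occupied only at the finitely many initial indices $t \le s < \Gamma$.

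Finally I would convert this into non-recurrence. If $\mathbf{s}$ were recurrent, then started at $\mathbf{s}$ the chain would return to $\mathbf{s}$ infinitely often with probability one, and in particular at some time $t \ge s+1$; invoking the previous step at that return time yields the contradiction. (Equivalently, a state occupied at only finitely many time indices is visited finitely often and is therefore transient.) The step to get right is the boundary bookkeeping in the reconstruction observation, namely ensuring the backward unwinding never crosses below slot $1$ and making explicit that truncation at $\Gamma$ is precisely what renders below-threshold ages informative, while equal entries at $\Gamma$ (where the last-success time is undetermined) are exempt. Everything else reduces to the single-success-per-slot property, so the argument transfers essentially verbatim from the Threshold ALOHA analysis of \cite{Orhan}.
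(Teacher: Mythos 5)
Your proposal is correct and follows essentially the same argument the paper gives (via \cite{Orhan}): a repeated below-threshold age value $s$ would force two sources to have transmitted successfully in the same slot $s$ steps earlier, contradicting the single-success-per-slot property, so such states cannot be revisited and are transient. Your backward-induction reconstruction and the explicit conversion to non-recurrence are just careful formalizations of that same sketch.
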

We note that all the nodes that satisfy the condition in Prop. \ref{prop:recurrentClass} can be reached from the state with all entries equal to $\Gamma$, which is known to be in the only recurrent class of the Markov Chain, therefore, all such states are recurrent.
\par Next, in order to simplify the identification of recurrent states, we adopt the following notation of \cite{Orhan} whereby the \textit{type} of a recurrent state is written as:
\begin{equation}
    T\langle s_{1} \quad s_{2} \quad \ldots \quad s_{n}\rangle = (M,\{u_{1}, u_{2}, \ldots, u_{n-M} \}),
\end{equation}
In this notation, $M$ is the number of active sources (i.e. sources with the truncated age of $\Gamma$), and the set $\{u_{1}, u_{2}, \ldots, u_{n-M} \}$ is the set of age values in the state that correspond to the passive sources (i.e. sources with an age smaller than $\Gamma$). The motivation behind this notation stems from the following proposition that simplifies the steady state analysis of MiSTA policy to a great extent:
\begin{prop} \label{Prop:3}
    States with the same types have equal steady state probabilities.
\end{prop}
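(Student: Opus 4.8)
The plan is to establish the result as a direct consequence of the symmetry of the MiSTA policy across sources, combined with the uniqueness of the steady state distribution already noted above. First I would formalize the intuition that relabeling the sources leaves the dynamics unchanged. For a permutation $\sigma$ of $\{1,\ldots,n\}$, let $\sigma$ act on a state by permuting its coordinates, $\sigma\langle s_{1}, \ldots, s_{n}\rangle = \langle s_{\sigma^{-1}(1)}, \ldots, s_{\sigma^{-1}(n)}\rangle$. Since every source applies the identical attempt probability $\tau_1$, the identical back-off/transmit probability $\tau_2$, and the identical threshold $\Gamma$, and since all coin flips are independent and identically distributed across sources, the one-step transition kernel $P$ satisfies $P(\mathbf{s}\to\mathbf{s}') = P(\sigma\mathbf{s}\to\sigma\mathbf{s}')$ for every pair of states and every $\sigma$. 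Concretely, given a state, the only events that change ages are that each passive source deterministically increments (capped at $\Gamma$), and that among the active sources at most one may succeed, with a probability that depends only on the number $M$ of active sources; moreover the successful source, conditioned on a success occurring, is uniform over the active set. None of these mechanisms refers to the source labels, which gives the claimed invariance.

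Next I would use this invariance to show that the stationary distribution $\pi$ is itself permutation-invariant. If $\pi$ is stationary for $P$, then the transition-invariance implies that the relabeled measure $\pi'(\mathbf{s}) = \pi(\sigma^{-1}\mathbf{s})$ is also stationary for $P$; since the truncated chain has a unique steady state distribution (as argued earlier using ergodicity), we must have $\pi(\sigma\mathbf{s}) = \pi(\mathbf{s})$ for all recurrent $\mathbf{s}$ and all $\sigma$. Thus any two states that are permutations of one another carry equal steady state probability.

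It then remains to verify that two recurrent states have the same type exactly when one is a permutation of the other. This is where Proposition \ref{prop:recurrentClass} does the work. By that proposition, in a recurrent state the passive sources (those with age below $\Gamma$) have pairwise distinct ages, so the set $\{u_{1}, \ldots, u_{n-M}\}$ recorded in the type is a genuine set with no repeated entries. Consequently the type $(M,\{u_{1}, \ldots, u_{n-M}\})$ determines the full multiset of age values appearing in the state, namely $M$ copies of $\Gamma$ together with the distinct values $u_{1}, \ldots, u_{n-M}$. Two recurrent states therefore share a type if and only if they have the same multiset of coordinate values, i.e.\ if and only if one is obtained from the other by permuting coordinates. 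Combining this equivalence with the permutation-invariance of $\pi$ yields the proposition.

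I expect the only real subtlety to be the use of Proposition \ref{prop:recurrentClass}: without the guarantee that passive ages are distinct, states of the same type could differ in the multiplicities of below-threshold ages and need not be permutations of one another, so the type-to-permutation correspondence would break down. The transition-symmetry step, by contrast, is essentially immediate from the definition of the policy and requires only care in stating the uniform-over-active-sources structure of a successful transmission.
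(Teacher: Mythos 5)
Your proposal is correct and follows essentially the same route as the paper's proof: the paper likewise observes that states of the same type are coordinate permutations of one another and that, by the symmetry between users, permuting entries leaves the steady state probability unchanged. Your version merely formalizes the two sentences in the paper (kernel invariance under relabeling plus uniqueness of the stationary distribution, and the type-to-permutation correspondence via Proposition \ref{prop:recurrentClass}), which is a sound elaboration rather than a different argument.
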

\begin{proof}
States with the same types can be reached from one another by altering the order of the state entries. Due to the symmetry between the users, modifying the order of the state entries does not change the steady state probability of a state.
\end{proof}
\par Next, we show that the distribution of $M$ in steady state can be derived from the distribution of the Markov Chain. We validate this in Lemma \ref{lemma:truncated} by showing that the number $M$ for a state, which corresponds to the number of active sources pertaining to it, is what determines the steady state probability of that state.
\begin{lemma} \label{lemma:truncated}
The following are valid for the truncated Markov Chain $\{\textbf{A}^{\Gamma}[t], t\geq 1\}$:
\begin{enumerate}[i)]
  \item M, which is the number of active sources given a state vector $\left\langle s_{1} \quad s_{2} \quad \ldots \quad s_{n}\right\rangle$, is what solely determines the steady state probability of the state vector.
  \item Let $P_m$ be the total steady state probability of states with $m$ active users. Then the $\frac{P_m}{P_{m-1}}$ expression is found\footnote{The expression is:\\ \centerline{$\frac{\big(1/\tau_1 - (M-1)[(1-\tau_2)(1-\tau_1)^{M-2}+\tau_2(1-\tau_1\tau_2)^{M-2}]\big)(n-m+1)} {\big((1-\tau_2)(1-\tau_1)^{M-1}+\tau_2(1-\tau_1\tau_2)^{M-1}\big) m(\Gamma-1-n+m)}$.}}.
\end{enumerate}
\end{lemma}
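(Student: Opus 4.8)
**

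The plan is to establish both parts of Lemma \ref{lemma:truncated} by exploiting the ``type'' characterization of recurrent states together with a careful transition-probability analysis. For part (i), I would first invoke Proposition \ref{Prop:3}, which already tells us that all states sharing a common type have equal steady-state probability. What remains is to show that the steady-state probability is a function of $M$ \emph{alone}, i.e.\ that it does not depend on the particular multiset $\{u_1,\ldots,u_{n-M}\}$ of sub-threshold ages. My approach would be to argue that, among recurrent states, the passive ages are forced to be \emph{distinct} (by Proposition \ref{prop:recurrentClass}) and to lie in $\{1,\ldots,\Gamma-1\}$, and that the age values themselves carry no dynamical significance beyond the count $M$: a passive source's behaviour is completely inert (it neither transmits nor influences collisions) until its age reaches $\Gamma$. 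I would formalize this by showing that, conditioned on the number of active sources, the joint evolution is invariant under any relabeling of the passive age slots, so the stationary measure factors through $M$.

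For part (ii), the natural approach is to set up the balance (global or detailed-flow) equations relating $P_m$ and $P_{m-1}$ and solve for the ratio. I would begin by counting the number of distinct recurrent states of each type: for a fixed $m$, the active sources form a set of size $m$ chosen from $n$, and the $n-m$ passive sources carry \emph{distinct} ages drawn from $\{1,\ldots,\Gamma-1\}$, giving a combinatorial multiplicity that will produce the binomial-type factors $(n-m+1)$ and $(\Gamma-1-n+m)$ appearing in the stated expression. The dynamical content enters through the per-slot transition probabilities of the mini-slot mechanism: with $M$ active sources, each becomes an attempter with probability $\tau_1$, and the probability of a \emph{successful} transmission (exactly one node ultimately transmitting, either as the unique mini-slot attempter or as the survivor of the back-off stage with probability $\tau_2$) is precisely the quantity $(1-\tau_2)(1-\tau_1)^{M-1}+\tau_2(1-\tau_1\tau_2)^{M-1}$ that sits in the denominator. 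The complementary factor in the numerator, $1/\tau_1 - (M-1)[\cdots]$, I expect to arise from accounting for how a success transitions the system between the $m$ and $m-1$ levels while respecting the relabeling symmetry established in part (i).

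Concretely, I would write a flow-balance equation across the ``cut'' separating states with $m$ active users from those with $m-1$: the downward flow (a successful transmission reducing the active count from $m$ to $m-1$) must balance the upward flow (a passive source's age crossing the threshold $\Gamma$, raising the count from $m-1$ to $m$). Each side is the product of a stationary probability $P_m$ or $P_{m-1}$, a combinatorial multiplicity, and a transition probability computed from the mini-slot model. Dividing one side by the other and simplifying should yield the claimed ratio; the success probability expression will appear in the denominator and the upward-transition bookkeeping in the numerator.

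The main obstacle, I expect, is the careful accounting in part (ii): correctly enumerating which states feed into and out of a given level, and in particular pinning down the numerator term $1/\tau_1 - (M-1)[(1-\tau_2)(1-\tau_1)^{M-2}+\tau_2(1-\tau_1\tau_2)^{M-2}]$. This term does not have the clean interpretation of a single success probability; rather it seems to encode a normalized expression involving the \emph{attempter} probability $\tau_1$ and a correction indexed by $M-1$, which I anticipate comes from conditioning on \emph{which} active source succeeds and summing over the possibilities while dividing out the relabeling multiplicity. Getting the indices ($M$ versus $M-1$, $m$ versus $m-1$) and the $\Gamma$-dependent combinatorial factor exactly right—so that the transient states of Proposition \ref{prop:recurrentClass} are properly excluded—will require the most care.
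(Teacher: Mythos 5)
Your part (i) argument has a genuine gap, and since part (ii) presupposes part (i), it propagates. The claimed symmetry --- that \enquote{conditioned on the number of active sources, the joint evolution is invariant under any relabeling of the passive age slots} --- is false. Passive ages are not dynamically inert: they advance deterministically and determine exactly when a source activates, so two states with the same $M$ but different passive-age multisets have structurally different neighborhoods in the chain. Concretely, a level-$m$ state with a passive at age $\Gamma-1$ must move to level $m+1$ unless a success occurs, while one without such a passive cannot; a level-$m$ state containing a passive at age $1$ can only be entered following a success, while one with no age-$1$ entry can only be entered following a failure. It is true that the transition probabilities \emph{out of} a state depend only on $M$, but a stationary probability is determined by in-flows as well, so no permutation/relabeling symmetry forces the stationary measure to factor through $M$. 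This is precisely what the paper must (and does) verify by computation: it classifies the predecessors of the two kinds of recurrent states ($\mathcal{T}_1$, no age-$1$ entry, preceded by types $\mathcal{T}_2$ or $\mathcal{T}_3$; and $\mathcal{T}_0$, one age-$1$ entry), writes the two balance equations (\ref{eq:10})--(\ref{eq:11}), checks that the ansatz $\pi_{\mathcal{T}_1}=\pi_{\mathcal{T}_2}=\pi_M$, $\pi_{\mathcal{T}_0}=\pi_{\mathcal{T}_3}=\pi_{M-1}$ collapses \emph{both} equations to the single consistent ratio (\ref{eq:piM}), and then invokes uniqueness of the stationary distribution. Without this guess-and-verify step (or an equivalent), part (i) is unproven.

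Your part (ii), by contrast, is a valid and genuinely different route from the paper's, \emph{once (i) is in hand}. The paper never writes a cut equation: it obtains (\ref{eq:piM}) directly from the per-state balance equations above and then simply multiplies by the state counts $N_m=\binom{n}{m}\frac{(\Gamma-1)!}{(\Gamma-n-1+m)!}$. Your flow balance across the cut between levels $\le m-1$ and $\ge m$ reproduces the same ratio, with one refinement you should make explicit: a success and a threshold crossing can occur in the same slot, leaving the level unchanged, so the downward flow comes only from level-$m$ states \emph{without} a passive at $\Gamma-1$ (weighted by the success probability $m\tau_1[(1-\tau_2)(1-\tau_1)^{m-1}+\tau_2(1-\tau_1\tau_2)^{m-1}]$), and the upward flow only from level-$(m-1)$ states \emph{with} a passive at $\Gamma-1$ (weighted by the no-success probability $1-(m-1)\tau_1[(1-\tau_2)(1-\tau_1)^{m-2}+\tau_2(1-\tau_1\tau_2)^{m-2}]$). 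Proposition \ref{prop:recurrentClass} guarantees at most one passive sits at $\Gamma-1$, so the level changes by at most one per slot and the cut argument is legitimate; counting states on each side and using $\binom{n}{m-1}(n-m+1)=m\binom{n}{m}$ then yields (\ref{eq:piM}) and, after multiplying by $N_m/N_{m-1}$, the footnote expression. This also clarifies your numerator puzzle: $1/\tau_1-(M-1)[\cdots]$ is just the no-success probability at level $m-1$ divided by $\tau_1$, not a relabeling correction.
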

\begin{proof}
For the proof of Lemma 1, we will define 4 types of state vectors. Let the type 1 be defined as $\mathcal{T}_1 \triangleq (M,\{u_{1}, u_{2}, \ldots, u_{n-M} \})$, where $M$ is the number of active sources as before and the set $\{u_{1}, u_{2}, \ldots, u_{n-M} \}$ contains no entry equal to 1. This means that there has not been a successful transmission in the previous slot. Note that there can be at most one entry equal to 1 in the recurrent states, which occurs in the case of successful transmission in the previous slot. Then, the state in the previous slot can be one of the two types defined as:
\begin{itemize}
	\item 
	$\mathcal{T}_2 \triangleq (M,\{u_{1}-1, u_{2}-1, \ldots, u_{n-M}-1 \})$
	\item
	$\mathcal{T}_3 \triangleq (M-1,\{\Gamma-1, u_{1}-1, u_{2}-1, \ldots, u_{n-M}-1 \})$
\end{itemize}
If there was no passive source with the age equal to $\Gamma -1$ in the previous slot, then there must have been $M$ active slots in the previous slot, which is the case represented by type $\mathcal{T}_2$. On the other hand, if there was a passive source with the age equal to $\Gamma -1$ in the previous slot, then there must have been $M-1$ active slots in the previous slot since this passive source turns active in this slot, which is the case represented by type $\mathcal{T}_3$. 
The fourth type is defined as $\mathcal{T}_0 \triangleq   (M-1,\{u_{1}, u_{2}, \ldots, u_{n-M},1 \})$ and it represents the case where the previous slot is one of the types $\mathcal{T}_2$ and $\mathcal{T}_3$ and there is a successful transmission in that slot by one of the nodes. As a result of the successful transmission, one of the $\Gamma$ entries will be replaced by 1 leading to the difference between $\mathcal{T}_0$ and $\mathcal{T}_1$.
When there are $M$ active sources at a time slot, the probability of there being no collision in the mini slot and a sole node that captures the data slot is $M\tau_1 (1-\tau_1)^{M-1}$. If there are $j$($j>1$) sources that attempt a transmission in the mini slot, these $j$ sources shall attempt another transmission in the data slot with probability $\tau_2$ each. Then, the probability of having a collision in the mini slot and a successful transmission in the data slot is:
\begin{equation}
\begin{aligned}
    &\textrm{P}_{c,s} = \sum_{j=2}^{M} \binom{M}{j}\tau_1^j(1-\tau_1)^{M-j} j\tau_2(1-\tau_2)^{j-1} \\
	&\stackrel{(a)}{=} M\tau_1\tau_2 \sum_{j=2}^{M} \binom{M-1}{j-1}\tau_1^{j-1}(1-\tau_2)^{j-1}(1-\tau_1)^{M-j} \\
	&\stackrel{(b)}{=} M\tau_1\tau_2 \left((\tau_1 (1-\tau_2) + (1-\tau_1))^{M-1} - (1-\tau_1)^{M-1} \right) \\
	&= M\tau_1\tau_2\left((1-\tau_1\tau_2)^{M-1}-(1-\tau_1)^{M-1}\right)
\end{aligned}
\end{equation}
where (a) follows from $j\binom{M}{j} = M \binom{M-1}{j-1}$ and (b) follows from the binomial sum. We further note that if the previous slot is one of the types $\mathcal{T}_2$ and $\mathcal{T}_3$, then the present slot can be one of the two types $\mathcal{T}_0$ and $\mathcal{T}_1$. A type $\mathcal{T}_2$ state precedes a type $\mathcal{T}_1$ state with probability $1-M\tau_1(1-\tau_1)^{M-1}-M\tau_1\tau_2[(1-\tau_1\tau_2)^{M-1}-(1-\tau_1)^{M-1}]$ which is the failure probability in a slot since this case corresponds to all sources failing to transmit. Likewise, a type $\mathcal{T}_2$ state precedes one of the $M$ possible type $\mathcal{T}_0$ states with probability $\tau_1(1-\tau_1)^{M-1}+\tau_1\tau_2[(1-\tau_1\tau_2)^{M-1}-(1-\tau_1)^{M-1}]$ which is the successful transmission probability for a single active source in a slot. If we define $\pi_{\mathcal{T}_i}$ to be the steady state probability of a state of type $\mathcal{T}_i$, we can use the above methodology to derive the probabilities of all the transitions between types of states in order to write the following equations:
%\vspace{0.5cm}
\begin{equation} \label{eq:10} \hspace{-0.2cm} \small
\begin{aligned}
\pi_{\mathcal{T}_1} &= \pi_{\mathcal{T}_2}\big(1-M\tau_1[(1-\tau_2)(1-\tau_1)^{M-1}+\tau_2(1-\tau_1\tau_2)^{M-1}]\big)\\+  \pi_{\mathcal{T}_3}&M\big(1-(M-1)\tau_1[(1-\tau_2)(1-\tau_1)^{M-2}+\tau_2(1-\tau_1\tau_2)^{M-2}]\big)
\end{aligned}
\end{equation}
\begin{equation} \label{eq:11} \small
\begin{aligned}
&\pi_{\mathcal{T}_0} = \pi_{\mathcal{T}_2}\tau_1\big((1-\tau_2)(1-\tau_1)^{M-1}+\tau_2(1-\tau_1\tau_2)^{M-1}\big)\\& +  \pi_{\mathcal{T}_3}(M-1)\tau_1\big((1-\tau_2)(1-\tau_1)^{M-2}+\tau_2(1-\tau_1\tau_2)^{M-2}\big)
\end{aligned}
%\vspace{0.5cm}
\end{equation}
As (\ref{eq:10}) and (\ref{eq:11}) represent all possible ways of transition between the recurrent states and since the truncated Markov Chain has a unique steady state distribution, this set of equations fully specifies the steady state probabilities. Part (\textit{i}) of the Lemma can be proven here by assigning the probability $\pi_M$ to the states with $M$ active sources. To verify this, we make the substitutions $\pi_{\mathcal{T}_1}=\pi_{\mathcal{T}_2}=\pi_M$ and $\pi_{\mathcal{T}_0}=\pi_{\mathcal{T}_3}=\pi_{M-1}$. 
Then, \eqref{eq:10} and \eqref{eq:11} both yield the same expression when simplified:
\begin{equation} \label{eq:piM} \small
\hspace{-0.4cm}
\frac{\pi_{M}}{\pi_{M-1}} = \frac{1/\tau_1 - (M-1)\big((1-\tau_2)(1-\tau_1)^{M-2}+\tau_2(1-\tau_1\tau_2)^{M-2}\big)} {(1-\tau_2)(1-\tau_1)^{M-1}+\tau_2(1-\tau_1\tau_2)^{M-1}}.
\end{equation}
 	%\vspace{-0.7cm}
\par This proves part \textit{(i)}. We will use this to find the total probability of having $m$ active users in steady state. Due to part \textit{(i)}, all recurrent states with $m$ active sources have the same steady state probability, which is $\pi_m$. The total number of recurrent states with $m$ active sources that obey Prop. \ref{prop:recurrentClass} is calculated as:
	\begin{equation}
	N_m = \binom{n}{m}\frac{(\Gamma-1)!}{(\Gamma-n-1+m)!}
	\end{equation}
Therefore, $P_m = N_{m}\pi_m$ and $\frac{P_m}{P_{m-1}} = \frac{N_m}{N_{m-1}}\frac{\pi_m}{\pi_{m-1}}$, from which \textit{(ii)} immediately follows.
\end{proof}
\subsection{Steady State Analysis Using the Pivot Node Technique} \label{sect:PivtChn}
This section will bridge the essential steady state analysis of the previous section with the more detailed analysis and derivation of optimal policy parameters in the next two, through proving certain technical results that culminate in Corollary 1.  We will use the \enquote{pivot source} technique introduced in \cite{Orhan}, where an arbitary node is chosen as pivot, without loss of generality, as the sources are symmetric. We will analyze the system through the states of this pivot source by modifying the truncated Markov Chain in the previous subsection $\{\textbf{A}^{\Gamma}[t], t\geq 1\}$, and obtaining a \textit{pivoted Markov Chain} $ \{\textbf{P}^{\Gamma}[t], t\geq 1\}$. In this pivoted Markov Chain, all the source ages except the pivot source will be truncated at $\Gamma$, but the age of the pivot source is kept as it is. As in the previous subsection, we will define the \textit{type} of a state in the pivoted Markov Chain in order to extend the results of Lemma \ref{lemma:truncated} as:
\begin{equation}
    \textrm{T}^{\textbf{\textrm{P}}}\langle S^{\textbf{P}}\rangle \triangleq (s,M,\{u_{1}, u_{2}, \ldots, u_{n-M-1} \})
\end{equation}
According to this new notation, $s \in \mathbb{Z}^+$ is the state of the pivot source,  M is the number of active sources (i.e. the sources with the entry $\Gamma$) excluding the pivot, and the set $\{u_{1}, u_{2}, \ldots, u_{n-M} \}$ is the set of entries belonging to passive sources (i.e. sources with entry smaller than $\Gamma$), again excluding the pivot source. Similar to the truncated Markov Chain analysis, we will refer to such a state as \textit{type $M$-state} where it is clear from the context.
\begin{prop} \label{prop:pivot}
\begin{enumerate} [(i)]
    \item $\textbf{P}^{\Gamma}$ has a unique steady state distribution.
    \item A type-$m$ state in $\textbf{P}^{\Gamma}$ has a steady state probability equal to $\pi_m$, obeying (\ref{eq:piM}), given that $s\in\{1,2,\ldots,\Gamma-1\}$.
\end{enumerate}
\end{prop}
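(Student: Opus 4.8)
The plan is to prove both parts by comparing the pivoted chain $\mathbf{P}^{\Gamma}$ with the truncated chain $\mathbf{A}^{\Gamma}$ through a state-aggregation (lumping) argument, and to dispose separately of the one genuinely new difficulty: not truncating the pivot's age makes the state space infinite. The state space is $\mathbb{Z}^+ \times \{1,\dots,\Gamma\}^{n-1}$, countably infinite because the pivot's age is unbounded while the $n-1$ non-pivot coordinates stay confined to a finite set.

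For part (i), irreducibility and aperiodicity follow verbatim from the argument given for $\mathbf{A}^{\Gamma}$ in Section \ref{sect:steady}: a finite run of unsuccessful slots drives every non-pivot source to age $\Gamma$, and the fully-active configuration is reachable from and returns to every recurrent state. The new ingredient is positive recurrence on an infinite state space, for which I would apply a Foster--Lyapunov argument with $V(\cdot)=s$, the pivot age. Whenever the pivot is active ($s\ge\Gamma$) it captures the channel alone with probability at least $\tau_1(1-\tau_1)^{n-1}=:p_{\min}>0$, since at most $n-1$ other sources can be active; on that event its age resets to $1$. Hence the one-step drift satisfies $\mathbb{E}[\Delta V \mid s] = p(1-s)+(1-p)\cdot 1 \le 1-p_{\min}\,s$, which falls below $-1$ once $s$ exceeds the finite threshold $\max\{\Gamma,\lceil 2/p_{\min}\rceil\}$. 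Outside this finite set the drift is uniformly negative, so Foster's criterion gives positive recurrence, and with irreducibility and aperiodicity yields a unique steady-state distribution $\hat\pi$. (Equivalently, from any active state the pivot resets within a time dominated by a geometric variable of mean $1/p_{\min}$, so return times have finite expectation.)

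For part (ii), the key observation is that the truncation map $\phi$, which caps the pivot age at $\Gamma$, is a \emph{strong lumping} of $\mathbf{P}^{\Gamma}$ onto $\mathbf{A}^{\Gamma}$. Group together all pivoted states sharing the same non-pivot configuration and having $s\ge\Gamma$ (an active pivot), and keep each state with $s<\Gamma$ as its own singleton. I would verify strong lumpability by checking that the probability of entering any target group is identical for every state in a group: once $s\ge\Gamma$ the transition law depends on the pivot only through its being active, so the number of contenders and hence every collision/success probability is the same for all $s\ge\Gamma$; moreover a pivot success always sends the age to the singleton group $s=1$, while any other outcome increments it to $s+1\ge\Gamma$, leaving it in the active group. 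Thus the target group never depends on the exact value of $s$, and the lumped chain coincides transition-for-transition with $\mathbf{A}^{\Gamma}$.

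With lumpability in hand, the pushforward $\phi_*\hat\pi$ (assigning each group the total $\hat\pi$-mass of its states) is a stationary distribution of the lumped chain $\mathbf{A}^{\Gamma}$; since that chain is ergodic with the unique stationary distribution assigning $\pi_M$ to every recurrent type-$M$ state by Lemma \ref{lemma:truncated}, we get $\phi_*\hat\pi=\pi$ exactly, with no residual normalization. When $s\in\{1,\dots,\Gamma-1\}$ the pivot is passive, its group is a singleton, and $\phi_*$ acts as the identity, so the pivoted probability of a type-$m$ state equals the truncated probability of the corresponding state, namely $\pi_m$ obeying \eqref{eq:piM}. I expect the main obstacle to be the careful verification of strong lumpability --- in particular confirming that a non-pivot success keeps the pivot in the active group and that all collision/success probabilities are genuinely $s$-independent for $s\ge\Gamma$ --- together with the Foster-type bound needed to rule out transience on the infinite state space.
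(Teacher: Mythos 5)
Your proof is correct, but it takes a genuinely different route from the paper's. The paper disposes of part (ii) by observing that for $s\in\{1,\dots,\Gamma-1\}$ a state of $\textbf{P}^{\Gamma}$ and the corresponding state of $\textbf{A}^{\Gamma}$ are literally the same event (both chains being functions of the same underlying age process), so their stationary probabilities coincide and \eqref{eq:piM} applies; for part (i) it introduces, for each $s$, a finite \emph{augmented truncated} chain in which the pivot is truncated at $s+1$, shows each such chain is ergodic by the reachability-under-repeated-failure argument, and then appeals to a claimed one-to-one correspondence with the pivoted chain. You instead attack the infinite state space head-on: a Foster--Lyapunov drift bound with $V=s$ (using the uniform success lower bound $\tau_1(1-\tau_1)^{n-1}$, which is valid since a sole mini-slot attempter always succeeds) gives positive recurrence, and part (ii) follows from strong lumpability of the age-capping map onto $\textbf{A}^{\Gamma}$ together with uniqueness of the lumped chain's stationary distribution from Lemma \ref{lemma:truncated}. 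Your version buys rigor where the paper is loosest: the paper's ``one-to-one correspondence'' between a family of finite truncations and an infinite-state chain is really an unproven consistency claim, whereas your drift criterion settles existence and uniqueness in one step, and your lumpability check (transition law depending on the pivot only through its active/passive status) is exactly the formalization of the paper's ``same event'' assertion. Two small patches are needed but do not affect the structure: aperiodicity does not follow ``verbatim'' from the $\textbf{A}^{\Gamma}$ argument, because the pivoted chain has no self-loop (an active pivot's age keeps incrementing on failure); either note that uniqueness of the stationary distribution needs only irreducibility and positive recurrence, or exhibit return times of coprime lengths (e.g.\ $\Gamma$ and $\Gamma+1$ to the state with pivot age $1$ and all others at $\Gamma$). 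Likewise, Foster's criterion should formally be applied to the chain restricted to its unique closed communicating class, with transient states (repeated passive ages) handled exactly as in Section \ref{sect:steady}.
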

\begin{proof}
\par States in $\textbf{P}^{\Gamma}$ where $s=1,2,\dots,\Gamma-1$ correspond to the states in the truncated Markov Chain $\textbf{A}^{\Gamma}$ where the source selected as the pivot has the same age. The system visiting these corresponding states in $\textbf{P}^{\Gamma}$ and $\textbf{A}^{\Gamma}$ is merely the same event, therefore the steady state probabilities and the transition probabilities for these states are equal. Therefore, they follow  (\ref{eq:piM}).
\par Now, for the states in $\textbf{P}^{\Gamma}$ for which $s \geq \Gamma$, we will prove that steady state probabilities exist. In order to do this, we define a \textit{augmented truncated Markov Chain} $ \{\textbf{A}^{s,\Gamma}[t], t\geq 1\}$, in which the only difference with the pivoted Markov Chain is that now the pivot source is truncated at $s+1$. At this point we consider the state  where the state of the pivot source is $s+1$ and the state of all the other sources are $\Gamma$ in the augmented truncated Markov Chain $ \{\textbf{A}^{s,\Gamma}[t], t\geq 1\}$. Then we realize that this specified state can be reached by any other state in the augmented truncated Markov Chain including itself, given that none of the last $s$ consecutive time slots resulted in a successful transmission. This is an event with non-zero probability. Thus, there is a single recurrent class and a unique steady state distribution for the augmented truncated Markov Chain. Finally, since the states in the augmented truncated Markov Chain have one-to-one correspondence with the states in the pivoted Markov Chain, the existence of a unique steady state distribution for the augmented truncated Markov Chain proves the existence of a unique steady state distribution for the states in the pivoted Markov Chain. 
\end{proof}
\begin{deftn}
Let the type of a state in $\textbf{P}^{\Gamma}$ be defined as $\textrm{T}^{\textbf{\textrm{P}}}\langle S^{\textbf{P}}\rangle = (s,m,\{u_{1}, u_{2}, \ldots, u_{n-m-1} \})$ where the $\{u_i\}$ are ordered from largest to smallest. Then $Q(S^{\textbf{P}})$, \textit{preceding} type of $S^{\textbf{P}}$, is defined as $\textrm{T}^{\textbf{\textrm{P}}}\langle S^{\textbf{P}}\rangle$ if $s=1$, as $(s-1,m,\{\Gamma-1,u_{1}-1, u_{2}-1,\ldots,u_{n-m-2}-1\})$ if $s \neq 1, u_{n-m-1} = 1$ and as $(s-1,m,\{u_{1}-1, u_{2}-1,\ldots,u_{n-m-1}-1\})$ if $\neq 1, u_{n-m-1} \neq 1$.
\end{deftn}
As can be seen from its definition, $Q(S^{\textbf{P}})$ is defined as the preceding type of $S^{\textbf{P}}$ given that the number of active sources (excluding the pivot source), $m$, does not change. This reasoning does not hold for the case $s=1$, nonetheless, since this case is not particularly the point of interest, we choose $Q(S^{\textbf{P}})$ to be the same type with $S^{\textbf{P}}$. Now that we have covered all possibilities for $Q(S^{\textbf{P}})$, we finally note that we will use $\pi(S^{\textbf{P}})$ or $\pi (s,m,\{u_{1}, u_{2}, \ldots, u_{n-m-1} \})$ to represent the steady state probability of $S^{\textbf{P}}$.
\begin{lemma} \label{lemma:3}
  Choose two arbitrary states in $\textbf{P}^{\Gamma}$, $S_1^{\textbf{P}}$ and $S_2^{\textbf{P}}$, where the state of the pivot source is equal for both states. Let the types of $S_1^{\textbf{P}}$ and $S_2^{\textbf{P}}$ be:
$$
    \textrm{T}^{\textbf{\textrm{P}}}\langle S_1^{\textbf{P}}\rangle = (s,m_1,\{u_{1}, u_{2}, \ldots, u_{n-m_1-1} \})
$$
$$
    \textrm{T}^{\textbf{\textrm{P}}}\langle S_2^{\textbf{P}}\rangle = (s,m_2,\{v_{1}, v_{2}, \ldots, v_{n-m_2-1} \})
$$
\begin{enumerate} [i)]
    \item Let $Q_1^{\textbf{P}}$ be any state satisfying $\textrm{T}^{\textbf{\textrm{P}}}\langle Q_1^{\textbf{P}}\rangle =  Q(S_1^{\textbf{P}})$. Then, 
    \begin{equation}
        \lim_{n \to \infty} \frac{\pi(S_1^{\textbf{P}})}{\pi(Q_1^{\textbf{P}})} = 1
    \end{equation}
    \item If $m_1=m_2$, then
    \begin{equation}
        \lim_{n \to \infty} \frac{\pi(S_1^{\textbf{P}})}{\pi(S_2^{\textbf{P}})} = 1
    \end{equation}
        \item If $m_1=m_2+1$, then
    \begin{equation}
        \lim_{n \to \infty} \frac{\pi(S_1^{\textbf{P}})}{n\,\pi(S_2^{\textbf{P}})} = \frac{1}{\alpha{e^{-k\alpha}}+\alpha\tau_2(e^{-\tau_2k\alpha}-e^{-k\alpha})} - k
    \end{equation}
\end{enumerate}
where $\lim_{n \to \infty} \frac{m_1}{n} = k$ and $\lim_{n \to \infty} \tau_1 n = \alpha$. ($k,\alpha \in \mathbb{R}^+$)
\end{lemma}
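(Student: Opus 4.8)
The plan is to write out the global balance equations of the pivoted chain $\textbf{P}^\Gamma$ and to solve them asymptotically, the decisive subtlety being that one must count not only \emph{which} types precede $S^{\textbf{P}}$ but with what \emph{multiplicity}. The anchor is the passive regime $s\le\Gamma-1$: there a pivoted state $(s,m,\{u_i\})$ coincides with a truncated state of total active count $m$ (the pivot being just another passive source), so Proposition~\ref{prop:pivot}(ii) and Lemma~\ref{lemma:truncated} give $\pi(S^{\textbf{P}})=\pi_m$, a function of $m$ alone. Hence for $s\le\Gamma-1$ parts (i) and (ii) hold \emph{exactly} ($\pi(S^{\textbf{P}})=\pi(Q(S^{\textbf{P}}))=\pi_m$ and any two equal-$m$ states share $\pi_m$), while part (iii) is read off from the large-$n$ behaviour of \eqref{eq:piM} using $m\tau_1\to k\alpha$, $(1-\tau_1)^{m-1}\to e^{-k\alpha}$, $(1-\tau_1\tau_2)^{m-1}\to e^{-\tau_2 k\alpha}$. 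Writing $D:=\alpha\big[(1-\tau_2)e^{-k\alpha}+\tau_2 e^{-\tau_2 k\alpha}\big]$ and $p_s:=kD$, $p_f:=1-p_s$ for the limiting per-slot success and failure probabilities, the ratio from \eqref{eq:piM} is $\pi_{m}/\pi_{m-1}\sim n\,p_f/D$, and the algebraic identity $p_f/D=1/D-k$ is exactly what turns this into the stated formula; in particular the term $-k$ is the footprint of the $p_s=kD$ contribution.

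The core of the argument is the predecessor bookkeeping. For a state $S^{\textbf{P}}$ with no passive entry equal to $1$ there is a \emph{single} type-$m$ predecessor, namely $Q(S^{\textbf{P}})$, reached by a failed slot (weight $p_f$), together with $\Theta(m)$ distinct type-$(m-1)$ predecessors — one for each active source that could have aged in from $\Gamma-1$ — each again of weight $p_f$. For a state with a passive entry equal to $1$ the roles swap: there are $\Theta(m)$ type-$m$ predecessors of type $Q(S^{\textbf{P}})$ (a success together with an aging-in, weight $\Theta(1/n)$ each) and a \emph{single} type-$(m+1)$ predecessor (a pure success, weight $\Theta(1/n)$). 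Substituting the ratios $\pi_{m\pm1}/\pi_m=\Theta(n^{\pm1})$, in each case the balance collapses to a $p_f$-contribution plus a $p_s$-contribution summing to $p_f+p_s=1$, which is part (i). Part (ii) then follows, for fixed $s$, by telescoping part (i) finitely many times down to the passive regime, where equal-$m$ states are genuinely equal; and part (iii) comes from retaining the leading coefficient in the same balance and checking that the winner/newcomer cross terms reproduce $1/D-k$.

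To leave the passive regime I would induct on the pivot age $s$, since for $s\ge\Gamma$ the pivot is itself active and Proposition~\ref{prop:pivot}(ii) no longer fixes $\pi(S^{\textbf{P}})$. In the inductive step the attempting population is $m+1$ rather than $m$ and every transition weight must be recomputed \emph{conditioned on the pivot not being the winner}; the saving grace is that the pivot attempts with probability $\tau_1=\Theta(1/n)$, so its own per-slot success probability is $\Theta(1/n)$. This is precisely why the age law is locally flat and the ratio in part (i) stays equal to $1$ throughout the active range, the global decay of the pivot-age distribution taking place only on the $\Theta(n)$ scale.

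I expect the main obstacle to be exactly this conditioned multiplicity count in the active regime: one must verify that, after inserting the pivot into the attempting set and removing it from the winning event, the $\Theta(m)$ aging-in predecessors and the success predecessor(s) still combine to the clean limits $1$ (part (i)) and $1/D-k$ (part (iii)). A secondary difficulty is uniformity: because the pivot age ranges up to $\Theta(n)$, the estimates in parts (i)--(iii) must hold uniformly enough in $s$ that the telescoping used for part (ii) does not accumulate error across the whole range.
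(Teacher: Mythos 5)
Your proposal is correct and follows essentially the same path as the paper's own proof in Appendix \ref{app:lemma-f(k)}: anchoring at the passive regime $s\le\Gamma-1$ via Proposition \ref{prop:pivot} and \eqref{eq:piM}, doing the predecessor bookkeeping with the same single-predecessor versus $\Theta(m)$-predecessor multiplicity split, inducting on the pivot age for $s\ge\Gamma$ with the pivot's own $\Theta(1/n)$ success probability absorbed into the limit, and telescoping through $Q(\cdot)$ to obtain parts (ii) and (iii); your clean $p_f+p_s=1$ collapse is precisely the limit computations \eqref{eq:lemma2long1} and \eqref{eq:lemma2long2}. The only differences are presentational: the paper treats $s=\Gamma$ as a separate base case where the equality $\pi(S_1^{\textbf{P}})=\pi_{m_1}$ holds exactly rather than asymptotically (equations \eqref{eq:36} and \eqref{eq:37}), and it leaves implicit the uniformity-in-$s$ issue that you rightly flag as a remaining gap.
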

\begin{proof}
See Appendix \ref{app:lemma-f(k)}.
\end{proof}
\begin{theorem} \label{lemma:f(k)}
For some $r,\alpha \in \mathbb{R}^+$, such that $\lim_{n \to \infty} \frac{\Gamma}{n} = r$ and $\lim_{n \to \infty} \tau_1 n = \alpha$, define $f:(0,1)\to \mathbb{R}$:
\begin{equation}
\begin{aligned}
f(x) &= \ln(\frac{1}{{x\alpha}e^{-x\alpha}+{x\alpha}\tau_2(e^{-\tau_2x\alpha}-e^{-x\alpha})} - 1) \\&+ \ln(\frac{r}{x+r-1}-1)
\end{aligned}
\end{equation}
Then, for all $m$ such that $\lim_{n \to \infty} \frac{m}{n} = k \in (0,1)$ and $ s \in \mathbb{Z}^+$ 
\begin{equation}
    \lim_{n \to \infty} \ln \frac{P_m^{(s)}}{P_{m-1}^{(s)}} = f(k)
\end{equation}
where $P_m^{(s)}$ is the steady state probability of having $m$ active sources (excluding the pivot source), where state of the pivot source is $s$.
\end{theorem}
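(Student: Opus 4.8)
The plan is to split the ratio $P_m^{(s)}/P_{m-1}^{(s)}$ into a purely combinatorial counting factor and a per-state probability factor, compute each factor's leading asymptotics separately, and show that the two diverging factors combine into a finite limit whose logarithm is $f(k)$. Let $N_m^{(s)}$ denote the number of recurrent states of type $m$ in $\textbf{P}^{\Gamma}$ with the pivot age fixed at $s$. By Lemma \ref{lemma:3}(ii) all such states are asymptotically equiprobable, with a common value I will call $\pi_m^{(s)}$; hence I may write $P_m^{(s)} \sim N_m^{(s)}\pi_m^{(s)}$ and therefore $\frac{P_m^{(s)}}{P_{m-1}^{(s)}} = \frac{N_m^{(s)}}{N_{m-1}^{(s)}}\cdot\frac{\pi_m^{(s)}}{\pi_{m-1}^{(s)}}(1+o(1))$.

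For the counting factor I would repeat the enumeration behind Lemma \ref{lemma:truncated}: choose the $m$ active sources among the $n-1$ non-pivot sources in $\binom{n-1}{m}$ ways and assign distinct below-threshold ages to the remaining $n-1-m$ passive non-pivot sources, which by Prop. \ref{prop:recurrentClass} are drawn from $\{1,\dots,\Gamma-1\}$ (also excluding the value $s$ when $s<\Gamma$). Either way the count yields
\begin{equation}
\frac{N_m^{(s)}}{N_{m-1}^{(s)}} = \frac{n-m}{m(\Gamma-n+m)}\,(1+o(1)),
\end{equation}
so, with $m/n\to k$ and $\Gamma/n\to r$, one gets $n\,N_m^{(s)}/N_{m-1}^{(s)} \to (1-k)/\big(k(k+r-1)\big)$. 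For the probability factor I would apply Lemma \ref{lemma:3}(iii), taking a type-$m$ state as $S_1^{\textbf{P}}$ and a type-$(m-1)$ state as $S_2^{\textbf{P}}$ sharing the pivot age $s$; since $\lim m/n = k$ this gives $\tfrac{1}{n}\,\pi_m^{(s)}/\pi_{m-1}^{(s)} \to 1/D(k)-k$, where $D(k) \triangleq \alpha e^{-k\alpha}+\alpha\tau_2\big(e^{-\tau_2 k\alpha}-e^{-k\alpha}\big) = \alpha\big[(1-\tau_2)e^{-k\alpha}+\tau_2 e^{-\tau_2 k\alpha}\big]$.

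Multiplying the two factors, the $\Theta(1/n)$ counting ratio and the $\Theta(n)$ probability ratio cancel exactly, leaving
\begin{equation}
\lim_{n\to\infty}\frac{P_m^{(s)}}{P_{m-1}^{(s)}} = \frac{1-k}{k(k+r-1)}\left(\frac{1}{D(k)}-k\right) = \frac{1-k}{k(k+r-1)}\cdot\frac{1-kD(k)}{D(k)}.
\end{equation}
Taking logarithms and using $\frac{1}{kD(k)}-1 = \frac{1-kD(k)}{kD(k)}$ together with $\frac{r}{k+r-1}-1 = \frac{1-k}{k+r-1}$, this matches $f(k)$ term by term, which proves the claim.

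I expect the main obstacle to be the very first step: justifying $P_m^{(s)} \sim N_m^{(s)}\pi_m^{(s)}$. The number of type-$m$ states grows super-exponentially in $n$, whereas Lemma \ref{lemma:3}(ii) only provides pointwise ratio limits, so I must first upgrade it to a bound $\pi(S_1)/\pi(S_2)=1+o(1)$ that is \emph{uniform} over the type (using the quantitative estimates behind parts (i)--(ii) of Lemma \ref{lemma:3}) before replacing the sum by count $\times$ common value. A useful sanity check that bypasses this difficulty is the regime $s<\Gamma$: there Prop. \ref{prop:pivot}(ii) makes all type-$m$ states exactly equiprobable (equal to the $\pi_m$ of the truncated chain), so $P_m^{(s)}=N_m^{(s)}\pi_m$ holds exactly, and inserting the closed form (\ref{eq:piM}) with $M=m$ reproduces the same limit $f(k)$. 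The only remaining care is bookkeeping precision, since the finite limit emerges solely from the cancellation of the leading $n$-powers of the two factors.
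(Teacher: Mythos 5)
Your proposal is correct and follows essentially the same route as the paper's proof: decompose $P_m^{(s)}/P_{m-1}^{(s)}$ into the combinatorial count ratio $N_m/N_{m-1}$ (computed from Prop.~\ref{prop:recurrentClass}-style enumeration) times the per-state probability ratio supplied by Lemma~\ref{lemma:3}(ii)--(iii), cancel the $\Theta(n)$ and $\Theta(1/n)$ factors, and take logarithms to recover $f(k)$. If anything, you are slightly more careful than the paper, which silently replaces sums over super-exponentially many states by count times pointwise limit without addressing the uniformity issue you flag, and which also ignores the exclusion of the pivot's age from the passive ages when $s<\Gamma$.
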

\begin{proof}
The total steady state probability of the states with $m$ active sources where the pivot source is in the state $s$ is $P_m^{(s)}$. The total number of such states is given as: 
\begin{equation}
    N_m = \binom{n-1}{m}\frac{(\Gamma-1)!}{(\Gamma-n+m)!} 
\end{equation}
Likewise, the total number of states with $m-1$ active sources where the pivot source is in the state $s$ is:
\begin{equation}
    N_{m-1} = \binom{n-1}{m-1}\frac{(\Gamma-1)!}{(\Gamma-n+m-1)!} 
\end{equation}
Then, the following gives the desired result
\begin{equation} \small
\hspace{-0.3cm}
\begin{aligned}
    &\lim_{n \to \infty} \frac{P_m^{(s)}}{P_{m-1}^{(s)}} = \lim_{n \to \infty} \frac{\sum\limits_{i=1}^{N_m} \pi(S_i^{(m)})}{\sum\limits_{j=1}^{N_{m-1}} \pi(S_j^{(m-1)})} \\
    &\stackrel{(a)}{=} \lim_{n \to \infty} \frac{n \sum\limits_{i=1}^{N_m} \left[\pi(S_i^{(m)}) / n\pi(S_1^{(m-1)})\right]}{\sum\limits_{j=1}^{N_{m-1}} \left[\pi(S_j^{(m-1)}) / \pi(S_1^{(m-1)})\right]} \\
    &\stackrel{(b)}{=} \lim_{n \to \infty} \frac{n \sum\limits_{i=1}^{N_m} (\frac{1}{\alpha{e^{-k\alpha}}+\alpha\tau_2(e^{-\tau_2k\alpha}-e^{-k\alpha})} - k)}{\sum\limits_{j=1}^{N_{m-1}} 1} \\
    &= \lim_{n \to \infty} \frac{n N_m(\frac{1}{\alpha{e^{-k\alpha}}+\alpha\tau_2(e^{-\tau_2k\alpha}-e^{-k\alpha})} - k)}{N_{m-1}} \\
    &= \lim_{n \to \infty} \frac{n (n-m)(\frac{1}{\alpha{e^{-k\alpha}}+\alpha\tau_2(e^{-\tau_2k\alpha}-e^{-k\alpha})} - k)}{m(\Gamma-n+m)} \\
    &= \left(\frac{1}{{k\alpha}e^{-k\alpha}+{k\alpha}\tau_2(e^{-\tau_2k\alpha}-e^{-k\alpha})} - 1\right) \left(\frac{1-k}{r+k-1}\right)
\end{aligned}
\end{equation}
\begin{equation}\small \notag
    \,
\end{equation}
where in the (a) step both sides of the fraction are divided to the steady state probability of a state with $m-1$ active sources where the pivot source is in the state $s$, and (b) follows from Lemma \ref{lemma:3} (ii) and (iii). Hence,
\begin{equation}
\begin{aligned}
    \lim_{n \to \infty} \ln \frac{P_m^{(s)}}{P_{m-1}^{(s)}} &= \ln (\frac{1}{{k\alpha}e^{-k\alpha}+{k\alpha}\tau_2(e^{-\tau_2k\alpha}-e^{-k\alpha})} - 1) \\&+ \ln (\frac{r}{r+k-1}-1) = f(k)
    \end{aligned}
\end{equation}
\end{proof}
What we essentially discovered is that as $n \to \infty$, the relation $P_m^{(s)} / P_{m-1}^{(s)}$ solely determines the distribution of $m$, no matter what the $s$ value is. This means that the number of active sources excluding the pivot source, $m$, is independent of the state of the pivot source. This result is formally expressed in the following corollary: 
\begin{corol}
\label{cor:pivot}
In the limit of a large network ($n \to \infty$), 
\begin{enumerate}[(i)]
    \item The number of active sources, $m$, (excluding the pivot) and the state of the pivot source, $s$, are independent.
    \item Given that the pivot source is active, ${\tau_1(1-\tau_1)^{m-1}}+{\tau_1\tau_2[(1-\tau_1\tau_2)^{m-1}-(1-\tau_1)^{m-1}]}$ is the probability of a successful transmission being made by the pivot source with no dependency on $s$.
    \item The probability of the pivot state being reset to 1 given that the pivot is active is $q_s = \lim_{l \to \infty} \sum\limits_{m=0}^l P_m^{(s)}{(\tau_1(1-\tau_1)^{m-1}}+{\tau_1\tau_2[(1-\tau_1\tau_2)^{m-1}-(1-\tau_1)^{m-1}])}$.
    
\end{enumerate}
\end{corol}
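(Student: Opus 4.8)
The plan is to handle the three parts in sequence: derive (i) directly from Theorem~\ref{lemma:f(k)}, then use (i) to supply the crucial ``no dependency on $s$'' needed for (ii), and finally assemble (iii) by marginalizing part~(ii) over the number of active sources.

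\textbf{Part (i).} The observation to exploit is that, for each fixed $n$ and each fixed pivot state $s$, the conditional law of the number of active sources given the pivot state is completely determined by its successive ratios together with normalization:
\[
\Pr(m \mid s) = \frac{P_m^{(s)}}{\sum_{m'} P_{m'}^{(s)}}, \qquad \frac{\Pr(m\mid s)}{\Pr(m-1\mid s)} = \frac{P_m^{(s)}}{P_{m-1}^{(s)}}.
\]
Theorem~\ref{lemma:f(k)} states that $\lim_{n\to\infty}\ln\bigl(P_m^{(s)}/P_{m-1}^{(s)}\bigr)=f(k)$ with $k=\lim_{n\to\infty} m/n$, and the expression $f(k)$ contains no $s$. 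I would therefore argue that the limiting conditional distribution of $m$ given $s$ is the \emph{same} profile for every value of $s$, since two distributions on the same index set with identical successive ratios (in the limit) and identical normalization coincide. A conditional law of $m$ that does not depend on the conditioning value $s$ is precisely asymptotic independence of $m$ and $s$.

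\textbf{Part (ii).} This is a direct probabilistic computation that reuses the algebra of the $\mathrm{P}_{c,s}$ derivation in the proof of Lemma~\ref{lemma:truncated}. Conditioning on the pivot being active, it succeeds in exactly two disjoint ways: either it is the sole attempter in the mini-slot, or there is a mini-slot collision which it wins in the data slot, meaning it retransmits (probability $\tau_2$) while every competing mini-slot attempter backs off. Summing over the number $j$ of competing mini-slot attempters, applying $j\binom{M}{j}=M\binom{M-1}{j-1}$ and then the binomial theorem, collapses the sum to the closed form stated in (ii). The essential point, that this probability carries no dependency on $s$, is inherited from part~(i): given the pivot is active, the contention outcome depends on the remaining sources only through their number, and that number is asymptotically independent of the pivot state $s$.

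\textbf{Part (iii).} Apply the law of total probability over the number of active sources. Given the pivot is active and in state $s$, the event that its age resets to $1$ is exactly the event that it transmits successfully; conditioning on there being $m$ active sources (weight $P_m^{(s)}$ within the $s$-slice) and inserting the per-$m$ success probability from part~(ii) gives
\[
q_s = \sum_{m} P_m^{(s)} \big(\tau_1(1-\tau_1)^{m-1} + \tau_1\tau_2[(1-\tau_1\tau_2)^{m-1}-(1-\tau_1)^{m-1}]\big),
\]
with the $\lim_{l\to\infty}$ making the countable sum precise. Here part~(ii) is what guarantees the success factor is well defined independently of $s$, so the marginalization is consistent.

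\textbf{Main obstacle.} The delicate step is part~(i): passing from convergence of the successive ratios (Theorem~\ref{lemma:f(k)}) to convergence of the conditional distributions, and hence to a clean independence statement. One must check that the limiting profile is proper --- that the ratios $e^{f(k)}$ yield a normalizable limiting density in $k$ with no mass escaping to $k\in\{0,1\}$ or to infinity --- and that the $n\to\infty$ limit may be interchanged with the normalizing sum. Establishing $f$ has a single sign change (so the product of ratios peaks at one interior $k^\ast$ and decays on both sides) is the natural way to secure tightness. Once the limiting conditional law of $m$ given $s$ is shown to exist and to be free of $s$, parts~(ii) and~(iii) reduce to the binomial bookkeeping already present in the proof of Lemma~\ref{lemma:truncated}.
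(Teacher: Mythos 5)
Your proposal is correct and takes essentially the same route as the paper: the paper likewise deduces parts (i) and (ii) from the $s$-free limit of $P_m^{(s)}/P_{m-1}^{(s)}$ established in Theorem~\ref{lemma:f(k)} (combined with the success-probability algebra already worked out in the proof of Lemma~\ref{lemma:truncated}), and part (iii) is exactly the marginalization over $P_m^{(s)}$ that you describe. The tightness concern you flag as the main obstacle is what the paper's Theorems~\ref{thm:1} and~\ref{thm:2} resolve: $m/n$ concentrates in probability at a root of $f$, and since $f$ does not depend on $s$ this holds identically for every pivot state, which legitimizes passing from convergence of successive ratios to a common, proper limiting law for $m$.
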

\begin{proof}
Parts (i) and (ii) follow from the proof of Lemma \ref{lemma:f(k)}. Since the distribution of the number of active sources $m$ and the state of the active source $s$ are independent, no mater what the $s$ value is, the pivot source observes the same number of active sources. Hence,  the  transition probabilities from $s=i$ to $s=i+1$ for $i < \Gamma$, and the transition probability from $s \geq \Gamma$ to 1 depends only on the number of active users. This means the evolution of the pivot source state $s$ is like shown in Fig. \ref{fig:markov}. 
\end{proof}
\begin{figure}[!htbp] 
\vspace{-0.7cm}
\centering
\includegraphics{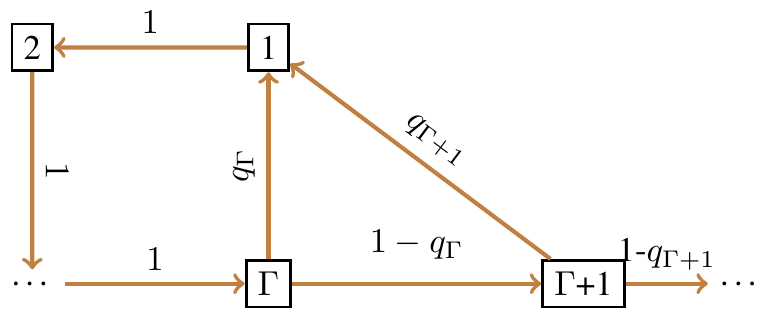}
\caption{State transition diagram of the pivot source}
\label{fig:markov}
\end{figure}
\par The transitions to the state 1 in Fig. \ref{fig:markov} represent successful transmissions made by the pivot source. In the rest, we will consider the asymptotic case of a large network as $n$ grows. All the transition probabilities to 1 (i.e. all the successful transitions of the pivot source) will have the same probability $q_o$ when $n\to \infty$, which will be showed later.
\subsection{Behaviour of MiSTA in Large Network Limit} \label{sect:rootAnalysis}
\par In this subsection, we compute the PMF of the number of active sources, $m$, at steady state under MiSTA, as $n\to \infty$. We begin by investigating the properties of the function $f$ defined in Theorem \ref{lemma:f(k)}, as it will provide valuable insight on the distribution of $m$. With this methodology, we prove that the fraction of active sources, $k$, converges in probability to one of the roots of $f$. 
\par In the asymptotic analysis, we re-express some of the parameters of the model, $\tau_1$ and $\Gamma$ to control their scaling  with $n$.
\begin{equation} \label{eq:30}
\alpha = n\tau_1, \;\; r = \Gamma/n,  \;\;  k = m/n
\end{equation}
\par As a result of this scaling model, the expected number of sources that will make a transmission attempt with probability $\tau_2$ in the data slot converges to a real number. Hence, we do not scale the value $\tau_2$ according to $n$. In the resulting system model; $k$, the fraction of users that are active, is the free variable while $\alpha$, $r$ and $\tau_2$ are fixed parameters that control the policy.  Due to the way it has been defined, $k$ will take values between 0 and 1, and it is an indicator of the instantaneous system load. The use of the $f(k)$ function at this point is due to the fact that roots of this function correspond to the local extrema of $P_m$. Specifically, decreasing roots of $f(k)$ correspond to the local maxima of $P_m$, where both $\ln {P_m}/{P_{m-1}}$ and $\ln {P_m}/{P_{m+1}}$ are positive.
\par Then, the number of roots $f(k)$ has is what restricts the number of local maxima for $P_m$. When the system was run using combinations of system parameters, it was observed that $f(k)$ may have more than 3 roots. Nevertheless, for more than 3 roots the AoI performance of the system is significantly poor compared to 1 or 3 roots. Thus, in our analysis we consider these two cases with desirable AoI results, which correspond to one local maximum and two local maxima for $P_m$, respectively. Despite their similarity, we can analyze these cases apart starting with one local maximum. The analysis is identical to the one that leads to the following result in \cite{Orhan}: 
\begin{theorem} \label{thm:1} \cite{Orhan}
	Let $m$ be the number of active sources and $k_0$ be the only root of $f(k)$. For the sequence $\epsilon_n = c n^{-1/3}$ where $c \in \mathbb{R}^+$,
	\begin{equation} \label{eq:thm1}
	\Pr(|\frac{m}{n} - k_0| < \epsilon_n) \to 1
	\end{equation}
\end{theorem}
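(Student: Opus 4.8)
The plan is to exploit the recursion behind $f$ together with its sign structure to show that the PMF $P_m$ of the number of active sources is sharply unimodal with mode at $m^\ast\approx nk_0$, and then bound the two tails. Since $k_0$ is the unique root and corresponds to a local maximum of $P_m$, $f$ is positive on $(0,k_0)$ and negative on $(k_0,1)$, so in particular $f'(k_0)<0$. By Theorem \ref{lemma:f(k)} (and the independence from $s$ established in Corollary \ref{cor:pivot}), for large $n$ the PMF obeys $\ln(P_m/P_{m-1}) = f(m/n) + o(1)$; hence $P_m$ is increasing while $m/n<k_0$ and decreasing once $m/n>k_0$. Consequently its mode $m^\ast$ satisfies $m^\ast/n\to k_0$, and it suffices to bound $\sum_{m/n\ge k_0+\epsilon_n}P_m$ and $\sum_{m/n\le k_0-\epsilon_n}P_m$ relative to the peak value $P_{m^\ast}$.

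For the upper tail, writing $m=m^\ast+\ell$ and telescoping,
\begin{equation}
\ln\frac{P_{m^\ast+\ell}}{P_{m^\ast}} = \sum_{i=1}^{\ell}\ln\frac{P_{m^\ast+i}}{P_{m^\ast+i-1}} = \sum_{i=1}^{\ell}\Big(f\big(k_0+i/n\big)+o(1)\Big).
\end{equation}
Taylor expanding $f(k_0+i/n)=f'(k_0)(i/n)+O((i/n)^2)$, at the boundary $\ell_0=\epsilon_n n = c\,n^{2/3}$ the leading sum equals $\tfrac{f'(k_0)}{2n}\ell_0^2 = \tfrac12 f'(k_0)c^2 n^{1/3}\to-\infty$ because $f'(k_0)<0$. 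Past the boundary every ratio satisfies $P_m/P_{m-1}\le \exp(f(k_0+\epsilon_n)+o(1))\le \exp(-|f'(k_0)|c\,n^{-1/3})=:\rho_n<1$, so the geometric sum $\sum_{\ell\ge\ell_0}P_{m^\ast+\ell}/P_{m^\ast+\ell_0}\le (1-\rho_n)^{-1}=O(n^{1/3})$. Multiplying, the whole upper tail is at most $P_{m^\ast}\,e^{-\Theta(n^{1/3})}O(n^{1/3})\to0$. The lower tail is identical, now using $f>0$ on $(0,k_0)$ so that the drift away from the mode again diverges to $-\infty$. Adding the two bounds and using $P_{m^\ast}\le1$ gives $\Pr(|m/n-k_0|\ge\epsilon_n)\to0$, as claimed.

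The scaling $\epsilon_n=c\,n^{-1/3}$ is precisely what makes this argument close, and motivating it is the conceptual heart of the proof: the window must vanish ($\epsilon_n\to0$) yet be wide enough that the quadratic drift $\tfrac{f'(k_0)}{2n}(\epsilon_n n)^2$ diverges, which forces $\epsilon_n\gg n^{-1/2}$, and $n^{-1/3}$ sits squarely in this range. The main obstacle I anticipate is that Theorem \ref{lemma:f(k)} supplies only a pointwise limit $\ln(P_m/P_{m-1})\to f(m/n)$, whereas the telescoping sum aggregates $\Theta(n^{2/3})$ of these ratios; I would therefore have to upgrade that statement to a uniform estimate over $m$ in the relevant band with a per-term error of order $o(n^{-1/3})$ (for instance $O(1/n)$), so that the accumulated error stays dominated by the diverging drift. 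Extracting such a rate from the asymptotics of Lemma \ref{lemma:3}, rather than relying on the bare pointwise convergence, is the step demanding the most care; the remainder is the geometric-tail bookkeeping sketched above, which is exactly the argument of \cite{Orhan}.
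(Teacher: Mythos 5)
First, a point of reference: the paper does not actually prove this theorem --- it is imported from \cite{Orhan} (note the citation in the statement itself, and the preceding sentence: \enquote{The analysis is identical to the one that leads to the following result in \cite{Orhan}}). So your proposal can only be judged against the argument the paper gestures at, and in outline your reconstruction is exactly that argument: interpret $\ln(P_m/P_{m-1})$ as a drift asymptotically equal to $f(m/n)$, conclude the PMF is unimodal with mode near $nk_0$, telescope outward from the mode, and choose the window $\epsilon_n = c\,n^{-1/3}$ so that it shrinks while the accumulated drift $\tfrac12 f'(k_0)c^2 n^{1/3}$ still diverges. Your explanation of why $n^{-1/3}$ sits in the viable range between $n^{-1/2}$ and $o(1)$ is the conceptual heart of the result, correctly identified.

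Two genuine issues remain. (1) The uniformity obstacle you flag is real --- Theorem \ref{lemma:f(k)} is a pointwise limit and cannot legitimately be summed over $\Theta(n^{2/3})$ ratios --- but the fix is more direct than \enquote{upgrading Lemma \ref{lemma:3}}: Lemma \ref{lemma:truncated}(ii) (equivalently \eqref{eq:piM} together with the state count $N_m$) gives the ratio $P_m/P_{m-1}$ \emph{exactly} for every finite $n$ and $m$, so the needed uniform bound $|\ln(P_m/P_{m-1})-f(m/n)| = O(1/n)$ over the relevant band follows from elementary estimates on that closed form (for instance $|m\ln(1-\tau_1)+m\tau_1| = O(n\tau_1^2) = O(1/n)$ since $\tau_1 = \alpha/n$). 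Note also that Theorem \ref{thm:1} concerns the unpivoted chain, so Lemma \ref{lemma:truncated} is the natural source here rather than Corollary \ref{cor:pivot}, which conditions on the pivot state. (2) Your assertion that the sign structure of $f$ gives \enquote{in particular $f'(k_0)<0$} is not correct as stated: a sign change from positive to negative only yields $f'(k_0)\le 0$, and in the degenerate case $f'(k_0)=0$ (e.g.\ $f$ behaving like $-(x-k_0)^3$ near the root) the quadratic drift term vanishes and the $n^{-1/3}$ window no longer forces the tails to zero --- the concentration scale would have to widen. Strict negativity of $f'$ at the root is a nondegeneracy hypothesis, implicit here and in \cite{Orhan}; your proof should assume it explicitly rather than derive it from unimodality. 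With those two repairs, your argument closes and is, as you say, the argument of \cite{Orhan}.
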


%TODO explain theorem's significance
\par Thm. \ref{thm:1} ensures that MiSTA asymptotically converges to a thinned slotted ALOHA policy, where only $nk$ sources are active at each slot. Since the fraction of active sources $k$ converges to $k_0$, this thinned slotted ALOHA scheme has approximately $nk_0$ active sources at each slot, independently of the individual states of the sources. This reduction in the number of contenders for a slot increases the success probability, thus improving throughput and consequently lowering AoI. 
\par The two local maxima case can similarly be analyzed by using Theorem \ref{thm:2} from \cite{Orhan}, which is based on Theorem \ref{thm:1}, except containing an additional requirement on the integral of the function $f$.
\par The difference between one local maximum and two local maxima cases is that in the two local maxima case, there are two decreasing roots of $f(k)$ that $k$ may converge to, which are the smallest and largest roots. As presented in Theorem \ref{thm:2}, the one that $k$ actually converges in probability is determined by the sign of the integral of $f(k)$ between these two roots.
\begin{theorem} \label{thm:2}\cite{Orhan}
	Let $k_0,k_1,k_2$ be all three and distinct roots of f(k) in increasing order and $m$ be the number of active sources.  
	\begin{enumerate}[i)]
	    \item If the integral of $f$ taken from $k_0$ to $k_2$ is negative, then for the sequence $\epsilon_n = c n^{-1/3}$ where $c \in \mathbb{R}^+$,
	\begin{equation}
	\Pr(|\frac{m}{n} - k_0| < \epsilon_n) \to 1
	\end{equation}
	\item If the integral of $f$ taken from $k_0$ to $k_2$ is positive, then for the sequence $\epsilon_n = c n^{-1/3}$ where $c \in \mathbb{R}^+$,
	\begin{equation}
	\Pr(|\frac{m}{n} - k_2| < \epsilon_n) \to 1
	\end{equation}
	\end{enumerate}
\end{theorem}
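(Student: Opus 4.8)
\emph{Approach.} The plan is to read $\ln P_m$ as a rescaled potential whose continuum derivative is $f$, so that the two local maxima $k_0$ and $k_2$ of that potential compete as candidate concentration points and the higher one wins. By Theorem~\ref{lemma:f(k)}, $\ln(P_m/P_{m-1})\to f(m/n)$; telescoping from a reference index $m_0$ with $m_0/n\to k_0$ gives
\begin{equation}
\ln\frac{P_m}{P_{m_0}}=\sum_{j=m_0+1}^{m}\ln\frac{P_j}{P_{j-1}}\;\approx\;n\int_{k_0}^{m/n}f(x)\,dx .
\end{equation}
Defining $\Phi(k)=\int_{k_0}^{k}f(x)\,dx$, this reads $P_m\approx\exp(n\,\Phi(m/n))$ up to subexponential factors. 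Here $\Phi'=f$ has the three roots $k_0<k_1<k_2$, with $k_0,k_2$ the local maxima of $\Phi$ (the decreasing roots of $f$) and $k_1$ its local minimum. Since $\Phi(k_0)=0$ by construction, the entire question collapses to the sign of $\Phi(k_2)=\int_{k_0}^{k_2}f(x)\,dx$, which is exactly the quantity appearing in the statement.

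\emph{Localization.} Next I would show that essentially all the mass lives in $n^{-1/3}$-windows of $k_0$ and $k_2$. Away from the roots $f$ is bounded away from $0$, so $P_m$ increases or decreases geometrically there and any region separated from $\{k_0,k_2\}$---in particular the valley at $k_1$---carries mass that is exponentially negligible compared with the nearer peak; this is the same sign-of-$f$ argument underlying Theorem~\ref{thm:1}. Near each peak, $\Phi''=f'<0$ gives Gaussian-type concentration within the window $\epsilon_n=c\,n^{-1/3}$ of Theorem~\ref{thm:1}. Because both peak windows contain only polynomially many states, the ratio of the total mass under the $k_2$-peak to that under the $k_0$-peak equals the ratio of peak heights up to a subexponential factor, and is therefore governed by $\exp\!\big(n\,\Phi(k_2)\big)$.

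\emph{Conclusion.} Combining the two ingredients, if $\int_{k_0}^{k_2}f<0$ then this ratio tends to $0$ exponentially fast: all the mass collapses onto the $k_0$-window, and the local statement of Theorem~\ref{thm:1} applied there yields $\Pr(|m/n-k_0|<\epsilon_n)\to1$, which is case (i). Interchanging the roles of $k_0$ and $k_2$ gives case (ii) whenever the integral is positive.

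\emph{Main obstacle.} The delicate point is making the replacement $\sum_j\ln(P_j/P_{j-1})\approx n\int f$ quantitatively legitimate across the $\Theta(n)$ indices separating the peaks, since the accumulated error must not be allowed to overturn the sign of $n\int_{k_0}^{k_2}f$. Two errors have to be controlled at once: the finite-$n$ discrepancy $\ln(P_m/P_{m-1})-f(m/n)$, which I would bound uniformly on the compact interval $[k_0,k_2]\subset(0,1)$ using the explicit ratios from Lemma~\ref{lemma:3} together with Theorem~\ref{lemma:f(k)}; and the Riemann-sum error $\sum_j f(j/n)-n\int f$, which is $O(1)$ by smoothness of $f$. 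Both are $o(n)$, so they cannot flip the sign once $\int_{k_0}^{k_2}f$ is a nonzero constant. This is precisely where the hypothesis of three \emph{distinct} roots is used: it keeps $\int_{k_0}^{k_2}f$ bounded away from $0$ and keeps $\Phi''(k_0),\Phi''(k_2)$ strictly negative, so the two peaks stay well separated and non-degenerate.
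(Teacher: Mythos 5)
Your proposal is sound and follows essentially the route the paper relies on: the paper gives no proof of Theorem~\ref{thm:2} itself (it is imported from \cite{Orhan}, with the remark that it is \enquote{based on Theorem~\ref{thm:1}, except containing an additional requirement on the integral of the function $f$}), and that is exactly your potential-function mechanism, in which telescoping $\ln(P_m/P_{m-1})\approx f(m/n)$ turns the stationary distribution into $\exp\bigl(n\,\Phi(m/n)\bigr)$ with $\Phi(k)=\int_{k_0}^{k}f$, so that the sign of $\int_{k_0}^{k_2}f$ decides which of the two peaks carries all the mass while the $n^{-1/3}$-window concentration near the winning peak follows from the same sign-of-$f$ localization used for Theorem~\ref{thm:1}. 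Your flagging of the uniform control of $\ln(P_m/P_{m-1})-f(m/n)$ over the $\Theta(n)$ indices between the peaks as the main technical burden, to be handled via the exact ratios of Lemma~\ref{lemma:truncated} and Lemma~\ref{lemma:3}, is also where the rigor genuinely belongs.
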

\par Note that the most favorable case is for $k$ to converge to $k_0$, so that the number of active sources is the smallest possible. Then, the system parameters should be selected accordingly.
\subsection{Optimal Average AoI for Large Networks}
\label{sect:ageCalc}
\begin{theorem} \label{thm:ageResult}
    In the large network limit (i.e. $n\to \infty$), optimal parameters of MiSTA satisfy the following:
\begin{equation}
\lim_{n \to \infty} \frac{\Gamma^*}{n} = 1.59
\end{equation}
\begin{equation}
\lim_{n \to \infty} n\tau_1^* = 10
\end{equation}
\begin{equation}
\tau_2^* = 0.38
\end{equation}
Furthermore, the optimal expected AoI at steady state scales with $n$ as:
\begin{equation}
\lim_{n \to \infty} \frac{\Delta^*}{n} = 0.9641
\end{equation}
\end{theorem}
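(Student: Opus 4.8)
The plan is to reduce the network-wide time-average AoI to the expected AoI of a single \emph{pivot} source, express that quantity in closed form as a function of the scaled parameters $r=\Gamma/n$, $\alpha=n\tau_1$ and $\tau_2$ in the limit $n\to\infty$, and then minimize the resulting function numerically subject to the convergence guarantees of Theorems \ref{thm:1} and \ref{thm:2}. By the symmetry and ergodicity already established, $\Delta^*=\lim_{t\to\infty}\mathbb{E}[A_i[t]]$ for the pivot, so it suffices to analyze the pivot's age process.

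First I would invoke Corollary \ref{cor:pivot}: because the number of active sources $m$ (excluding the pivot) is asymptotically independent of the pivot state $s$, the pivot's age evolves exactly as the birth chain of Fig.~\ref{fig:markov}. Its age climbs deterministically through the passive values $1,2,\ldots,\Gamma-1$ and, once active, resets to $1$ from any state $s\geq\Gamma$ with a single success probability $q_o$. This makes the age a renewal process whose cycle consists of the $\Gamma-1$ passive slots followed by a geometric number $K\sim\mathrm{Geom}(q_o)$ of active slots. Applying the renewal-reward theorem, the expected cycle length is $\mathbb{E}[L]=(\Gamma-1)+1/q_o$ and the expected area under the age sawtooth is
\begin{equation}
\mathbb{E}\Big[\sum_{\text{cycle}}A\Big]=\frac{(\Gamma-1)\Gamma}{2}+\frac{\Gamma}{q_o}+\frac{1-q_o}{q_o^2},
\end{equation}
so that $\Delta=\mathbb{E}[\sum_{\text{cycle}}A]/\mathbb{E}[L]$.

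Next I would pass to the large-network limit. By Theorems \ref{thm:1}--\ref{thm:2} the fraction of active sources concentrates at the favourable (small) root $k_0$ of $f$, and by Corollary \ref{cor:pivot}(ii)--(iii) the per-slot success probability evaluated at $m\approx nk_0$ satisfies $nq_o\to \beta:=\alpha\big(e^{-k_0\alpha}+\tau_2(e^{-\tau_2 k_0\alpha}-e^{-k_0\alpha})\big)$, which is precisely the denominator appearing in Lemma \ref{lemma:3}(iii) at $k=k_0$. Substituting $\Gamma=rn$ and $q_o=\beta/n$ and retaining the dominant $n^2$ terms of the numerator and $n$ terms of the denominator yields
\begin{equation}
\lim_{n\to\infty}\frac{\Delta}{n}=\frac{\tfrac{1}{2}r^2+ r/\beta + 1/\beta^2}{\,r+1/\beta\,}=:g(r,\alpha,\tau_2),
\end{equation}
an explicit function of the three policy parameters, with $k_0$ (and hence $\beta$) determined implicitly through $f(k_0)=0$.

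Finally I would minimize $g$, and this is where the main obstacle lies: $g$ is not a freely varying function of $(r,\alpha,\tau_2)$, since $k_0$ is pinned down by the transcendental relation $f(k_0)=0$, and one must simultaneously ensure, via the integral criterion of Theorem \ref{thm:2}, that the chosen parameters actually drive $k$ to the small root $k_0$ rather than the large root $k_2$. I would therefore solve $f(k_0)=0$ for $k_0$ as a function of $(r,\alpha,\tau_2)$, substitute into $g$, and perform a constrained numerical minimization over the region where $f$ exhibits the required one- or three-root structure; this produces the stationary point $r^*=1.59$, $\alpha^*=10$, $\tau_2^*=0.38$ with $g(r^*,\alpha^*,\tau_2^*)=0.9641$. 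The remaining effort is to confirm that this interior stationary point is the global minimum of $g$ within the admissible region and that it lies strictly inside the favourable-root regime, so that the concentration results of Theorems \ref{thm:1}--\ref{thm:2} indeed apply and the value $0.9641$ is attained.
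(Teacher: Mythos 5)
Your proposal is correct and follows essentially the same route as the paper: reduction to the pivot chain of Fig.~\ref{fig:markov} via Corollary~\ref{cor:pivot}, the limit $n q_o \to \alpha e^{-k_0\alpha}+\alpha\tau_2(e^{-\tau_2 k_0\alpha}-e^{-k_0\alpha})$ via the concentration of Theorems~\ref{thm:1}--\ref{thm:2}, the identical limiting expression for $\lim_{n\to\infty}\Delta/n$ (your renewal-reward computation agrees term-by-term with the paper's formula $\frac{\Gamma(\Gamma-1)}{2(\Gamma-1+1/q_0)}+1/q_0$), and a constrained numerical minimization over $(r,\alpha,\tau_2)$ with $k_0$ pinned by $f(k_0)=0$. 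The only cosmetic difference is that the paper derives the pivot's stationary age distribution $\pi_j$ explicitly and sums, rather than invoking renewal-reward, while your explicit handling of the integral criterion of Theorem~\ref{thm:2} for selecting the small root is if anything slightly more careful than the paper's own presentation.
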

\begin{proof}
At the ending of section \ref{sect:PivtChn}, $q_0$ was defined as the successful transmission probability of an active source, moreover, it has been argued that this value is fixed, or independent of age, in the steady state as the number of active sources converge to a value. We can also express $q_0$ in the following way:
\begin{equation}
    q_0 = \mathbb{E} [{\tau_1(1-\tau_1)^{M-1}}+{\tau_1\tau_2[(1-\tau_1\tau_2)^{M-1}-(1-\tau_1)^{M-1}]}]
\end{equation}
where the expectation is over the PMF of the number of active sources, $M$, at steady state, which was analyzed earlier.
We will first prove that 
\begin{equation}
\lim_{n \to \infty} n\,q_0 = \alpha e^{-k_0 \alpha}+ \alpha\tau_2(e^{-\tau_2k_0 \alpha}-e^{-k_0 \alpha})
\end{equation}
Let $\gamma_n$ be defined as: 
\begin{equation} \label{eq:44}
\gamma_n \triangleq \Pr(m_0-cn^{2/3} < M < m_0+cn^{2/3})
\end{equation} 
where $m_0 = k_0 n$. From Theorems \ref{thm:1} and \ref{thm:2}, $\gamma_n \to 1$ as $n \to \infty$. When $M$ satisfies the bounds  given in (\ref{eq:44}), the successful transmission probability, $q_0$, is also bounded. With this information, the following bound on $q_0$ is found: 
\begin{equation} \small
%\hspace{-1cm}
\begin{aligned}
 \gamma_n[(\tau_1[&(1-\tau_1)^{m_0}(1-\tau_2)+\tau_2(1-\tau_1\tau_2)^{m_0}])(1-\tau_1)^{-cn^{2/3}}] < q_0\\ < \gamma_n[&(\tau_1[(1-\tau_1)^{m_0}(1-\tau_2)+\tau_2(1-\tau_1\tau_2)^{m_0}])(1-\tau_1)^{cn^{2/3}}]\\&+(1-\gamma_n)   
\end{aligned}
\end{equation}
As $n\to \infty$, both upper and lower bounds converge to
${\tau_1(1-\tau_1)^{m_0}}+{\tau_1\tau_2[(1-\tau_1\tau_2)^{m_0}-(1-\tau_1)^{m_0}]}$. Finally,
\vspace{0.3cm}
\begin{equation} \small
\begin{aligned}
\lim_{n \to \infty} n\,q_0 &= \lim_{n \to \infty} n\tau_1[(1-\tau_1)^{m_0}(1-\tau_2)+\tau_2(1-\tau_1\tau_2)^{m_0}] \\&= \alpha e^{-k_0 \alpha}+ \alpha\tau_2(e^{-\tau_2k_0 \alpha}-e^{-k_0 \alpha})
\end{aligned}
\end{equation}
Since the transitions between the states of a source is as given in Fig. \ref{fig:markov}, value of $q_0$ can be used to compute the steady state probabilities of the states. Furthermore, the states correspond one-to-one with the ages the source have. Then, finding the steady state probabilities of the states is merely finding the steady state probabilities of the ages. With this methodology, the steady state probability of state $j$ is:
\begin{equation}
\pi_j = \frac{(1-q_0)^{max\{j-\Gamma,0\}}}{\Gamma-1+1/q_0}, \hspace{5mm} j = 1,2,\ldots
\end{equation}
The expected time-average AoI expression is found using the steady state probabilities of the ages: 
\begin{equation}
\Delta = \frac{\Gamma(\Gamma-1)}{2(\Gamma-1+1/q_0)} + 1/q_0
\end{equation}
The average AoI is also expressed in the limit of large network as:
\begin{equation} \label{eq:ageformula}
\begin{aligned}
\lim_{n \to \infty} \frac{\Delta}{n} &= \frac{r^2}{2(r+\frac{1}{\alpha e^{-k_0 \alpha}+ \alpha\tau_2(e^{-\tau_2k_0 \alpha}-e^{-k_0 \alpha})})} \\&+ \frac{1}{\alpha e^{-k_0 \alpha}+ \alpha\tau_2(e^{-\tau_2k_0 \alpha}-e^{-k_0 \alpha})}
\end{aligned}
\end{equation}
The system parameters $r$ and $k_0$ can be used to re-express(\ref{eq:ageformula}) as:
\begin{equation} \label{eq:alternativeAge}
\lim_{n \to \infty} \frac{\Delta}{n} = r \frac{k_0^2+1}{2(1-k_0)}
\end{equation}
With the right selection of system parameters, the Average AoI expression can be minimized. 
\end{proof}
\par Analyzing (\ref{eq:ageformula}), optimal parameters and some other steady-state characteristics such as $k_0$ and average AoI, are derived for MiSTA. These findings are summarized in Table I with corresponding values for Threshold ALOHA and slotted ALOHA for comparison. Since both Threshold ALOHA and MiSTA have two regimes of operation, namely two local maxima case and single local maximum case; the results for these regimes are provided seperately. 
\par As presented in Table I, the throughput in Slotted ALOHA cannot exceed $e^{-1}$ and minimum average AoI scales with $n$ as $en$. The average AoI value drops to nearly half this value in Threshold ALOHA policy, while approximately maintaining the throughput level of slotted ALOHA. MiSTA, on the other hand, significantly increases the throughput value and decreases the average AoI to an even smaller value.
\vspace{-0.5 cm}
\begin{center}

\begin{displaymath}
\begin{small}
%\hspace{-0.6cm}
\begin{array}{|l|c|c|c|c|c|c|} \hline 
\, & r^* & \alpha^* & \tau_2^* & k_0^* & \Delta^*/n & Thr. \\ \hline 
\textrm{MiSTA(SP)} & 1.59  & 9.8 & 0.37 & 0.1565 & 0.9656 & 0.5252 \\ \hline
\textrm{MiSTA(DP)} & 1.59  & 10 & 0.38 & 0.1555 & \textbf{0.9641} & 0.5266 \\ \hline
\textrm{TA(SP)} & 2.17  & 4.43 & - & 0.2052 &  1.4226 & 0.3658 \\ \hline
\textrm{TA(DP)} & 2.21  & 4.69 & - & 0.1915  & \textbf{1.4169} & 0.3644 \\ \hline 
\textrm{SA}    & 0  & 1 & - & 1 & e  & e^{-1}\\ \hline \end{array}
\end{small}
\end{displaymath}
\vspace{0 cm}
\captionof{table}{A comparison of optimized parameters of ordinary slotted ALOHA, Threshold ALOHA and mini slotted Threshold ALOHA and the resulting AoI and throughput values. $r^*$: age-threshold$/n$; $\tau_2^*$: probability of transmission in the second toss; $\alpha^*$: transmission probability$\times n$; $k_0^*$: expected fraction of active users; $\Delta^*$: avg. AoI}
\end{center}
\subsection{Derivation of a Lower Bound for AoI} \label{sect:lowerbound}  
\par In this subsection, we derive the maximum throughput attainable by the MiSTA policy and use the resulting value to obtain a lower bound on the minimum time average AoI achievable. As throughput corresponds to the long term average rate of transmissions in the network, a lower bound on average AoI in the network was shown to be bounded in terms of $q^{max}$, the maximum achievable throughput, in \cite{shirin}:
\begin{equation} \label{eq:sh-limit}
	    \frac{\Delta}{n} \geq \frac{1}{2q^{max}}+\frac{1}{2n}
\end{equation}
The idea in MiSTA is to lower AoI below that attained by Threshold ALOHA, by increasing the throughput with respect to the latter. As stated earlier, $\tau_1$  and $\tau_2$ are the attempt probabilities of the sources that are active in the mini slot and the data slot, respectively. At a given instant, let the number of active sources be $m$ and let $q$ be the instantaneous throughput of MiSTA. Then $q$ can be written as:
\begin{equation}
%\hspace{-0.5cm}
    q = {m\tau_1(1-\tau_2)(1-\tau_1)^{m-1}}+m\tau_1\tau_2(1-\tau_1\tau_2)^{m-1}
\end{equation}
Our analysis will mainly focus on the asymptotic case where $n$ and $m$ values are large. We define $G \triangleq m\tau_1$ and rewrite $q$ in the limit of large $m$ as:
\begin{equation}
    q = {\tau_2}Ge^{-{\tau_2}G}+{(1-\tau_2)Ge^{-G}}
\end{equation}
The above expression is optimized with the following choice of parameters:
\begin{equation} \label{eq:thr-optimal}
    q^{max} = 0.5315,\quad G^*=1.59,\quad \tau_2^*=0.38
\end{equation}
Finally, we use \eqref{eq:sh-limit} and \eqref{eq:thr-optimal} to obtain the following proposition.
\begin{prop} \label{prop:low-limit}
Average AoI in MiSTA is lower bounded by $0.9407n + 0.5$.
\end{prop}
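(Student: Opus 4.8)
The plan is to combine the universal lower bound \eqref{eq:sh-limit}, which relates network average AoI to the maximum attainable throughput $q^{max}$, with a direct computation of $q^{max}$ for the MiSTA transmission mechanism. Because \eqref{eq:sh-limit} holds for any policy expressed in terms of its own $q^{max}$, the entire task collapses to determining the largest instantaneous throughput MiSTA can sustain in the large-network limit; once that number is in hand, the proposition follows by a single substitution.

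First I would fix the number of active sources $m$ and write the probability that exactly one packet is delivered in a slot. There are two disjoint ways this happens: a single source is the sole attempter in the mini slot and thus captures the data slot (contributing $m\tau_1(1-\tau_2)(1-\tau_1)^{m-1}$ after accounting for the second toss), or a collision occurs in the mini slot but exactly one contender survives the back-off and transmits alone (contributing $m\tau_1\tau_2(1-\tau_1\tau_2)^{m-1}$). Summing gives the instantaneous throughput $q$. Passing to the limit $m\to\infty$ with $G \triangleq m\tau_1$ held fixed, the binomial factors converge as $(1-\tau_1)^{m-1}\to e^{-G}$ and $(1-\tau_1\tau_2)^{m-1}\to e^{-\tau_2 G}$, so that $m\tau_1=G$ yields the smooth objective
\begin{equation}
q(G,\tau_2) = (1-\tau_2)\,G e^{-G} + \tau_2\,G e^{-\tau_2 G}.
\end{equation}

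Next I would maximize $q(G,\tau_2)$ over $G>0$ and $\tau_2\in(0,1]$. Imposing $\partial q/\partial\tau_2=0$ simplifies to the relation $e^{-\tau_2 G}(1-\tau_2 G)=e^{-G}$, while $\partial q/\partial G=0$ supplies a companion equation; together these form a coupled transcendental system with no closed-form solution, which I would resolve numerically to obtain the critical point recorded in \eqref{eq:thr-optimal}, namely $G^*=1.59$, $\tau_2^*=0.38$, with optimal value $q^{max}=0.5315$. To certify that this interior stationary point is the \emph{global} maximum, I would examine the boundary of the parameter region: $q\to 0$ as $G\to 0$ or $G\to\infty$, and at $\tau_2=1$ the objective reduces to the plain slotted-ALOHA throughput $Ge^{-G}$ whose peak $e^{-1}\approx 0.368$ is strictly smaller than $0.5315$, so the maximum cannot lie on the boundary.

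Finally, substituting $q^{max}=0.5315$ into \eqref{eq:sh-limit} gives $\Delta/n \geq 1/(2\cdot 0.5315) + 1/(2n) = 0.9407 + 1/(2n)$, and multiplying through by $n$ yields the stated bound $\Delta \geq 0.9407n + 0.5$. I expect the throughput optimization to be the main obstacle: since the first-order conditions are transcendental and couple $G$ with $\tau_2$, the value $q^{max}$ can only be pinned down numerically, and genuine care is required to confirm global optimality over the two-dimensional region rather than settling for a local stationary point—everything else in the argument is an immediate consequence of the cited bound.
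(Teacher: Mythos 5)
Your proposal is correct and follows essentially the same route as the paper: the universal bound \eqref{eq:sh-limit} from \cite{shirin}, the instantaneous success probability $q = m\tau_1(1-\tau_2)(1-\tau_1)^{m-1}+m\tau_1\tau_2(1-\tau_1\tau_2)^{m-1}$, the large-$m$ limit with $G=m\tau_1$ fixed, numerical maximization giving $q^{max}=0.5315$ at $G^*=1.59$, $\tau_2^*=0.38$, and substitution into the bound. The only differences are cosmetic: your term-by-term reading of the regrouped throughput expression as two disjoint events is slightly off (the sole-attempter event has probability $m\tau_1(1-\tau_1)^{m-1}$ with no second toss, and the algebraic regrouping mixes the two events), and your explicit boundary check for global optimality is a small addition the paper omits.
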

The increase in the maximum achievable throughput is due to the novel time structure detailed in the beginning of Sec. \ref{sect:minislot}. Since the maximum achievable throughput in slotted ALOHA and TA is $e^{-1}$, the age lower bound for these policies is given as $1.3591n + 0.5$. Note that the increase in maximum achievable throughput for MiSTA results in an age lower bound which is not obtainable by these policies. Further, we observe that MiSTA policy is an effective way of utilizing the mini slot as the difference between the lower limit of Prop. \ref{prop:low-limit} and the optimal AoI of Table I is less than 2.5\%. The loss of throughput under age optimal MiSTA is less than 1\%.

\subsection{Spectral Efficiency Analysis of MiSTA} \label{sect:spectral}
\par When we propose prepending a mini slot to each data slot, one of the first concerns is to conserve the spectral efficiency of the  system. In this section we show that perhaps contrary to the immediate intuition MiSTA increases the spectral efficiency of the system, especially for large data slots. The spectral efficiency lost in mini slots is compensated by the increase in the throughput in almost all cases of practical interest. We will present these results using the notation in Table II.
\vspace{-0.5 cm}
\begin{center}
\begin{small}
\begin{displaymath}
%\hspace{-1.3cm}
\begin{array}{|l|c|} \hline 
%\, & \textbf{Definition} \\ \hline 
 \eta & \textrm{Spectral efficiency of the Threshold ALOHA(bits/s/Hz)}  \\ \hline
 \eta' & \begin{aligned}\textrm{Spectral efficiency of } &\textrm{the MiSTA (bits/s/Hz)} \end{aligned}\\ \hline
B & \textrm{Channel Bandwidth}  \\ \hline
H & \textrm{Time Horizon}   \\ \hline
T_b & \textrm{The time it takes to send 1 bit (s)}  \\ \hline
\theta_1 & \textrm{Throughput of Threshold ALOHA}  \\ \hline
\theta_2 & \begin{aligned} \textrm{Throughput of mini slotted Threshold ALOHA} \end{aligned} \\ \hline
c & \textrm{Number of bits in the data slot}  \\ \hline
d & \textrm{Number of bits in the mini slot}   \\ \hline
\end{array}
\end{displaymath}
\vspace{0 cm}
\end{small}
\captionof{table}{Notation used in the spectral efficiency analysis}
\end{center}
\par With the definitions given Table II, we first derive the following expressions.
\begin{equation}  
\eta = \frac{H{\theta_1}c}{HBc{T_b}} = \frac{\theta_1}{B{T_b}}
\end{equation}
\begin{equation}  
\eta' = \frac{H{\theta_2}c}{HB(c+d){T_b}} = \frac{{\theta_2}c}{(c+d)B{T_b}}
\end{equation}
\begin{equation}  
\frac{\eta'}{\eta} = \frac{\theta_2}{\theta_1}\frac{c}{c+d}
\end{equation}
\par In order to preserve the spectral efficiency, the $\eta/\eta'$ expression should at least be equal to 1. Some representative values that the ratio $\theta_2/\theta_1$ takes are provided in Table II.
\vspace{-0.5 cm}
\begin{center}
\begin{displaymath}
%\hspace{-2cm}
\begin{array}{|l|c|c|c|} \hline 
\, & \textrm{MiSTA}(\theta_2) & \textrm{TA}(\theta_1) & \theta_2/\theta_1 \\ \hline 
\textrm{1000 Sources} & 0.5251 & 0.3632 & 1.448  \\ \hline
\textrm{500 Sources} & 0.5179 & 0.3581 & 1.446 \\ \hline
\textrm{100 Sources} & 0.5019 & 0.3633 & 1.382  \\ \hline \end{array}
\end{displaymath}
\vspace{0 cm}
\captionof{table}{Typical values the $\theta_2/\theta_1$ expression takes for three different $n$ values, for MiSTA and Threshold ALOHA.}
\end{center}
\par Consistently with the representative values of $\theta_2/\theta_1$ given in Table III, it can be shown that as long as the ratio $c/d$ is greater than $2.23$, MiSTA incurs no loss in spectral efficiency. 
\par In protocols which are currently used in real time systems such as IEEE 802.11, the value $c$ is typically a few Kbytes. Moreover, for the identification purposes in the mini slot, a length such as 128 bits is adequate for $d$. Hence, even for short packets around 2 Kbytes, with a moderate 128 bit ID header, the ratio $c/d$ is well above $2.23$.
\section{An Extension of MiSTA with Multiple Mini Slots: MuMiSTA}
We can define a natural extension of MiSTA, the \textit{Multiple Mini Slotted Threshold ALOHA (MuMiSTA)}, where a multiple number of mini slots are prepended to each slot. Such a policy achieves a throughput value of 95\% with just 32 mini slots. Subsequently, the average AoI scales with $n$ as $0.531n$. The MATLAB simulation results for MuMiSTA policy are presented in Fig. \ref{fig:mumista} together with the simulation results of \cite{kadotaInfo2021} for comparison. The results for \cite{kadotaInfo2021} are not smooth for large packet sizes since the system does not reach steady state in the time horizon of the simulations. In these simulations MuMiSTA is run for $10^7$ with 32 mini slots and 100 users. It is apparent that this protocol becomes even more efficient when data slots become noticeably longer than mini slots. Due to lack of space, a detailed study of MuMiSTA is omitted and left for future work.
 \begin{figure}[ht] 
\centering
%\hspace{0.9cm}
\includegraphics[scale=0.55]{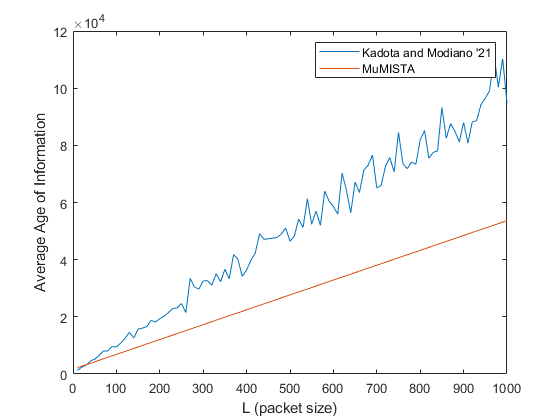}
\caption{Comparison of MuMiSTA and the reservation based random access policy in \cite{kadotaInfo2021}.}
\label{fig:mumista}
\end{figure}
\section{Numerical Results and Discussion} \label{sect:numerical}
\par In order to obtain analytical results, we modeled MiSTA with a Markov Chain whose states are vectors constructed with the individual ages of the sources. The evolution of this vector with time was then analyzed for results. In this section, we obtain simulation results by building the same system model in MATLAB and observing the evolution of a vector consisting of ages of the individual sources. The same simulation environment is also used to get corresponding values for slotted ALOHA and Threshold ALOHA, in order to make performance comparisons. First, the simulation for MiSTA is run with 300 sources for $10^7$ time slots in order to obtain the pmf of $m$, the number of active sources. The system parameters are chosen as the optimal double peak case parameters in Table I. In Fig. \ref{fig:Lemma_1_Figure}, the result of this simulation is presented as a histogram together with the theoretical curve for the PMF, which is plotted using Lemma \ref{lemma:truncated}. We see that simulation results match the theoretical findings. 
\begin{figure}[ht] 
\centering
%\hspace{0.9cm}
\includegraphics[scale=0.45]{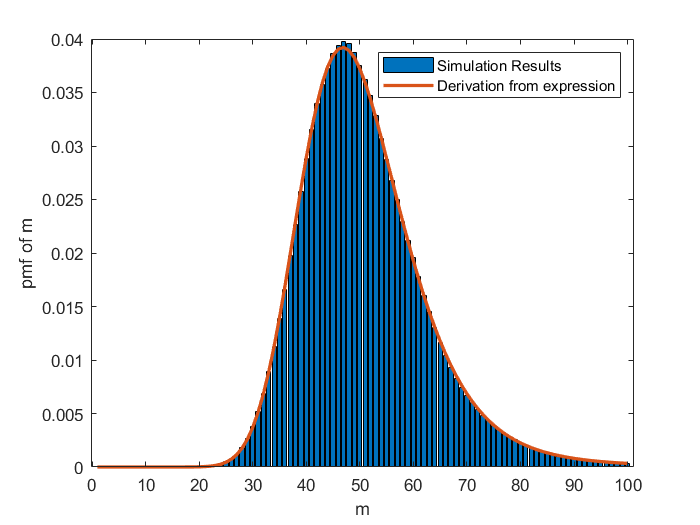}
\caption{The PMF of \textit{m}, instantaneous number of active sources, obtained with both simulations (histogram) and theoretical expression (\ref{eq:piM}) under 100 users.}
\label{fig:Lemma_1_Figure}
\end{figure}
\par Then, the evolution of the fraction of active sources, $k$, is observed with the same setting as time slots progress, and the results are plotted in Fig. \ref{fig:kvst}. We see that in less than $10^5$ time slots, the $k$ value converges to a specific value as theoretically proven in Theorems \ref{thm:1} and \ref{thm:2}. In addition, we see that when we multiply the number of sources with the value $k$ converges, we get the $m$ value where the pmf peaks in Fig. \ref{fig:Lemma_1_Figure} as expected.
\begin{figure}[ht] 
\centering
%\hspace{0.9cm}
\includegraphics[scale=0.45]{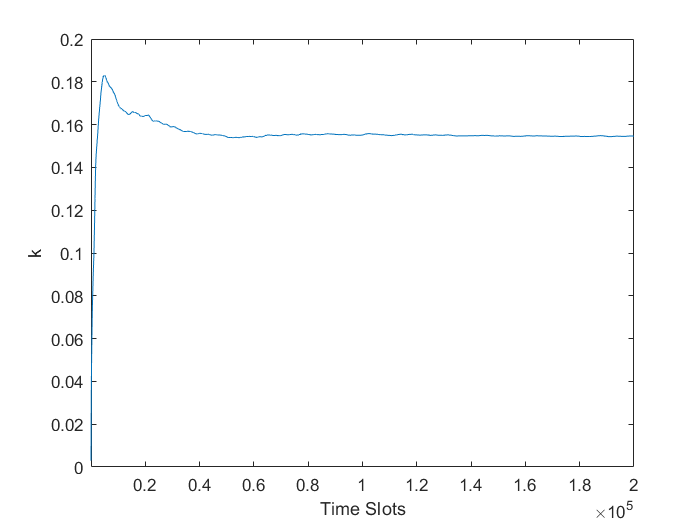}
\caption{The time evolution of instantaneous load, \textit{k}, as MiSTA progresses and converges.}
\label{fig:kvst}
\end{figure}
\par After the simulation with the double peak case optimal parameters, the parameters are changed to single peak optimal ones for comparison. The instantaneous AoI is plotted together for both sets of parameters as time slots progress, as given in Fig. \ref{fig:Converge_age}. We see that it takes around 250 thousand slots for the single peak optimal case to converge to the steady state AoI, whereas this value goes up to around 600 thousand slots for the double peak optimal case. Hence, while the double peak optimal case may be superior, it may take significantly longer time to converge. Thus, the single peak optimal case may be more suitable, especially when the network size is moderate or small. 
\begin{figure}[ht] 
\centering
%\hspace{0.9cm}
\includegraphics[scale=0.45]{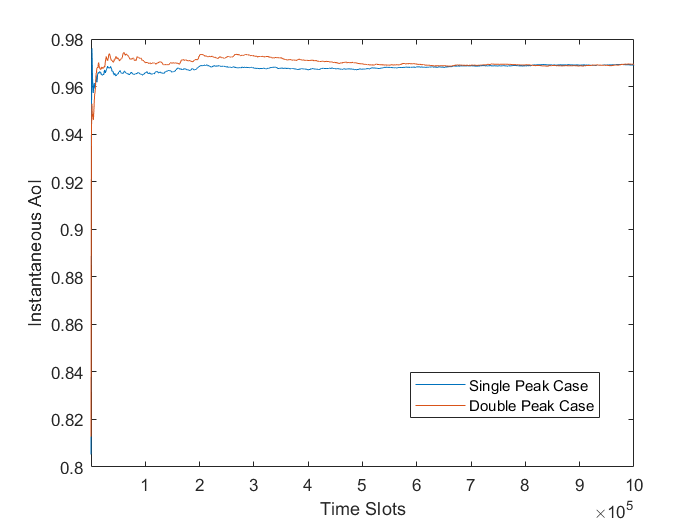}
\caption{The time evolution of average AoI as MiSTA operates in the single peak and double peak regimes, using optimal parameters in Table I.}
\label{fig:Converge_age}
\end{figure}
\par In Table I, we listed the optimal parameters for MiSTA, which are obtained analytically. Now, we run our simulation by sweeping through values of parameters $r$ and $\tau_2$ and by noting the resulting AoI values. Then, the average AoI is plotted against these parameters in Figs. \ref{fig:AoIvsr} and \ref{fig:AoIvstau2} respectively. The results in the figures are corresponding to the optimal parameters for double peak case in Table I, thus they verify our analytical findings.
\begin{figure} [ht]
\begin{minipage}{0.45\textwidth}
\includegraphics[width=\textwidth]{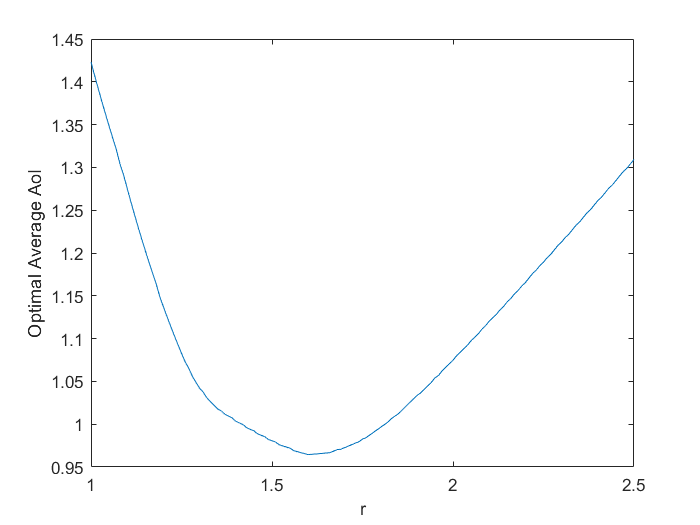}
\caption{Optimal average AoI vs $r$ plot.}
\label{fig:AoIvsr}
\end{minipage}%
\hspace{0.75cm}
\begin{minipage}{0.45\textwidth}
\includegraphics[width=\textwidth]{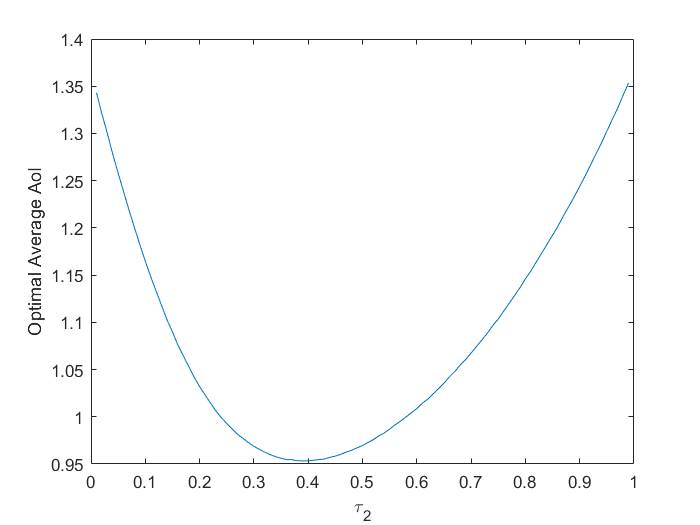}
\caption{Optimal average AoI vs $\tau_2$ plot.}
\label{fig:AoIvstau2}
\end{minipage}
\end{figure}
\par Finally, in the MATLAB simulations, we compare MiSTA with slotted ALOHA and Threshold ALOHA in terms of AoI and throughput performance. In these simulations, the number of sources ranges between 50 and 1000 and the system has been run for $10^7$ time slots with double peak case optimal parameters. First we compare the AoI performances of these policies in Fig. \ref{fig:AoIvsN}, where AoI is plotted against $n$. As expected, Threshold ALOHA reduces the minimum AoI achievable by slotted ALOHA to almost one half, while MiSTA outperforms both by reducing this value to roughly one third of achievable by slotted ALOHA. Then, we plot the throughput results for these policies in Fig. \ref{fig:throunum}, again plotting against $n$. As was the case for AoI performance, simulation results for throughput matchthe findings in Table I, where we had found that MiSTA increases the throughput to roughly $0.53$.  
\begin{figure}[ht] 
%\vspace{-0.7cm}
\centering
\includegraphics{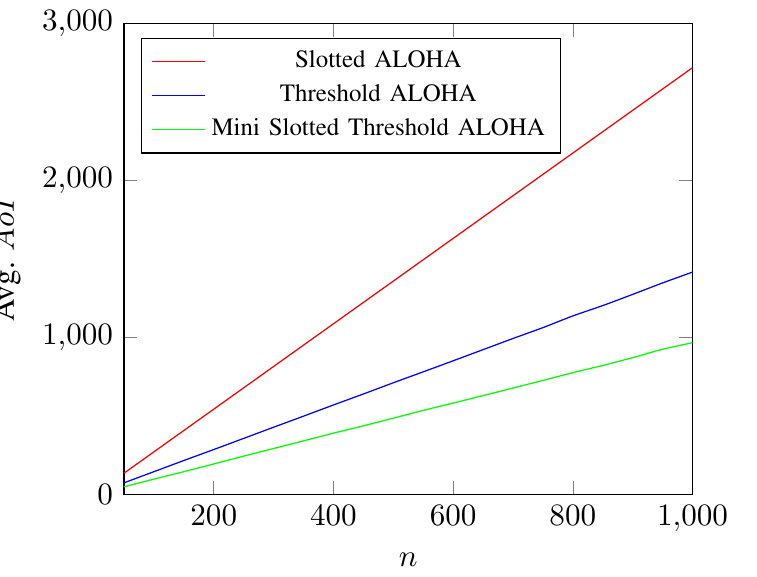}
\caption{Optimal time average $AoI$ vs $n$, number of sources, under Slotted ALOHA (computed), Threshold ALOHA (simulated) and mini slotted Threshold ALOHA (simulated).}
\label{fig:AoIvsN}
\end{figure}
\begin{figure}[ht] 
%\vspace{-0.7cm}
\centering
\includegraphics{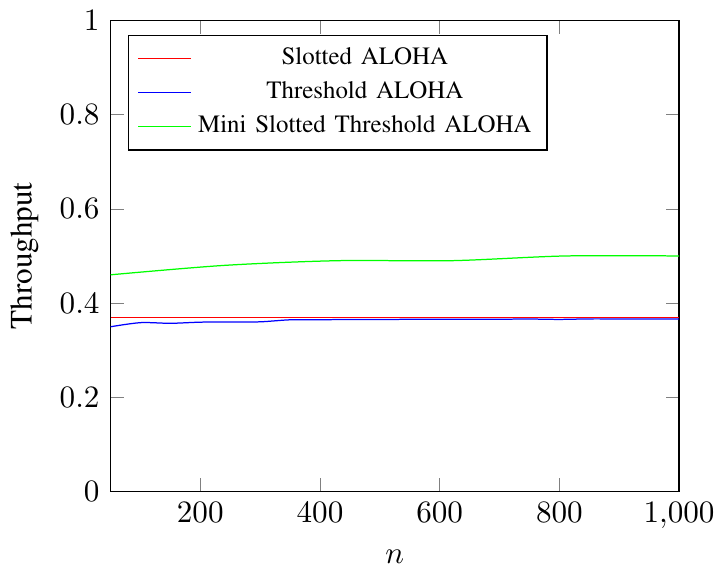}
\caption{Throughput vs $n$, number of sources, under Slotted ALOHA (computed), Threshold ALOHA (simulated) and mini slotted Threshold ALOHA (simulated).}
\label{fig:throunum}
\end{figure}
\vspace{-0.3cm}
\section{Conclusion} \label{sect:conclusion}
\par In this paper, a modification to the Threshold ALOHA policy, termed \textit{Mini Slotted Threshold ALOHA (MiSTA)}, was proposed. It was shown that, in a large network, MiSTA reduces the time average age of information attained in the network by $32\%$, while increasing throughput by $43\%$ over Threshold ALOHA. The policy achieves this through a change in slot structure, where users are given a chance to contend during a minislot placed in the beginning of any data slot, with success feedback given by the AP in the event of a successful transmission. The steady state age distribution of nodes was found. Using this, an average AoI expression was derived and optimized in terms of the policy parameters. The optimal parameters in the large network limit hav been summarized in Table I. The results show that MiSTA is equivalent to a thinned slotted ALOHA with around $15\%$ of all sources active at a slot and its optimal AoI value scales with the network size, $n$, as \textbf{0.9641n}. The minimum achievable AoI with MiSTA is around one third of that achievable by plain slotted ALOHA. In addition to its AoI performance, MiSTA increases the throughput value over slotted ALOHA by approximately  $45\%$ and increases the spectral efficiency of the system in all practical cases. It should be remembered that MiSTA uses success feedback per each successful transmission, which slotted ALOHA does not.
\par A possible direction for future work is to precisely formulate and analyze the \textit{MuMiSTA} policy briefly defined and numerically presented in this paper. Furthermore, analysis and optimization of MiSTA under various scenarios as stochastic arrivals, unreliable channels or contention resolution methods where multiple transmissions are permitted in a slot, are further directions that can be pursued.
\vspace{-0.3cm}

% if have a single appendix:
%\appendix[Proof of the Zonklar Equations]
% or
%\appendix  % for no appendix heading
% do not use \section anymore after \appendix, only \section*
% is possibly needed

% use appendices with more than one appendix
% then use \section to start each appendix
% you must declare a \section before using any
% \subsection or using \label (\appendices by itself
% starts a section numbered zero.)
%

\appendices
\section{Proof of Lemma \ref{lemma:3}} \label{app:lemma-f(k)}
\par  We will begin the proof for the $s$ values $s=1,2,\ldots,\Gamma-1$. The properties $(i)$ and $(ii)$ directly follow from Prop. \ref{prop:pivot} (i), where  $\pi(S_1^{\textbf{P}}) = \pi_{m_1}$ and $\pi(S_2^{\textbf{P}}) = \pi_{m_2}$. Although Property $(iii)$ follows from the same property, it is not directly seen
\begin{equation} \label{eq:45}
%\hspace{-0.85cm}
\begin{aligned}
    &\lim_{n \to \infty} \frac{\pi(S_1^{\textbf{P}})}{n\,\pi(S_2^{\textbf{P}})} = \lim_{n \to \infty} \frac{\pi_{m_1}}{n\pi_{{m_1}-1}} \\
    &\stackrel{(a)}{=} \lim_{n \to \infty} \frac{1}{n\tau_1[(1-\tau_1)^{m_1-1}(1-\tau_2)+\tau_2(1-\tau_1\tau_2)^{m_1-1}]} \\&-\frac{(m_1-1)\tau_1[(1-\tau_1)^{m_1-2}(1+\tau_2)+\tau_2(1-\tau_1\tau_2)^{m_1-2}]}{n\tau_1[(1-\tau_1)^{m_1-1}(1-\tau_2)+\tau_2(1-\tau_1\tau_2)^{m_1-1}]} \\
    &= \frac{1}{\alpha{e^{-k\alpha}}+\alpha\tau_2(e^{-\tau_2k\alpha}-e^{-k\alpha})} - k
\end{aligned}
\end{equation} 
where the step (a) follows from (\ref{eq:piM}). 
\par Next, we move on to the $s$ value $s = \Gamma$ and show that the properties still hold. We will start with showing that $\pi(S_1^{\textbf{P}}) = \pi_{m_1}$. Assuming that $1 \not\in \{u_{1}, u_{2}, \ldots, u_{n-m_1-1} \}$ holds and $S_1^{\textbf{P}}$ is the current state, the previous state can be one of the following types: 
\begin{itemize}
    \item $(\Gamma-1,m_1,\{u_{1}-1, u_{2}-1, \ldots, u_{n-m_1-1}-1 \})$
    \item $(\Gamma-1,m_1-1,\{\Gamma-1,u_{1}-1, u_{2}-1, \ldots, u_{n-m_1-1}-1 \})$
\end{itemize}
 In Prop. \ref{prop:pivot} (ii), the steady state probabilities of these types are given as $\pi_{m_1}$ and $\pi_{m_1-1}$, respectively. The steady state probability of $S_1^{\textbf{P}}$ can be calculated using the given probabilities for the preceding state and their transition probabilities as:  
\begin{equation} \label{eq:36}
\begin{aligned}
\pi(S_1^{\textbf{P}}) &= \pi_{m_1}(1-m_1\tau_1(1-\tau_1)^{m_1-1}\\&-m_1\tau_1\tau_2[(1-\tau_1\tau_2)^{m_1-1}-(1-\tau_1)^{m_1-1}]) \\&+  \pi_{m_1-1}m_1(1-(m_1-1)\tau_1(1-\tau_1)^{m_1-2}\\&-(m_1-1)\tau_1\tau_2[(1-\tau_1\tau_2)^{m_1-2}-(1-\tau_1)^{m_1-2}])\\ &= \pi_{m_1}
\end{aligned}
\end{equation}
where the $\pi_{m_1}$ result is obtained through the ratio given in (\ref{eq:piM}). Now that we are done with the case of $1 \not\in \{u_{1}, u_{2}, \ldots, u_{n-m_1-1} \}$, we move on to the case $1 \in \{u_{1}, u_{2}, \ldots, u_{n-m_1-1} \}$. We will assume $u_{n-m_1-1} = 1$ without loss of generality and then follow similar steps with the preceding case. The previous state can be one of the following this time: 
\begin{itemize}
    \item $(\Gamma-1,m_1+1,\{u_{1}-1, u_{2}-1, \ldots, u_{n-m_1-2}-1 \})$
    \item $(\Gamma-1,m_1,\{\Gamma-1,u_{1}-1, u_{2}-1, \ldots, u_{n-m_1-2}-1 \})$
\end{itemize}
Again using Prop. \ref{prop:pivot} (ii), we find the steady state probabilities of these states as $\pi_{m_1+1}$ and $\pi_{m_1}$, respectively. Then, the steady state probability of $S_1^{\textbf{P}}$ is derived using the transition probabilities as: 
\begin{equation} \label{eq:37} 
\begin{aligned}
\pi(S_1^{\textbf{P}}&) \\=&\pi_{m_1+1}(\tau_1(1-\tau_1)^{m_1}+\tau_1\tau_2[(1-\tau_1\tau_2)^{m_1}-(1-\tau_1)^{m_1}]) \\&+  \pi_{m_1}m_1\tau_1[(1-\tau_1)^{m_1-1}(1+\tau_2)-\tau_2(1-\tau_1\tau_2)^{m_1-1}] \\=&\pi_{m_1}
\end{aligned}
\end{equation}
\par  $\pi(S_2^{\textbf{P}}) = \pi_{m_2}$ since there is a symmetry. The properties $(i)$ and $(ii)$ follow from Prop. \ref{prop:pivot} (i) and Property $(iii)$ follows from (\ref{eq:45}).
\par The last case of ranges for the $s$ value is $\forall s \geq \Gamma$. In order to show that the lemma still holds for this case, we will use induction. The preceding case $s = \Gamma$ is the initial case and is already covered. Therefore, we assume that $s>\Gamma$ and that the properties of the lemma holds for all the smaller values of $s$. Here we will again use two cases two make our analysis:\\
\textbf{Case 1}. If $1 \not\in \{u_{1}, u_{2}, \ldots, u_{n-m-1} \}$ \\
For the sake of readability, the probability expressions are shortened in the following way:
\begin{equation} \small
%\hspace{-0.3cm}
 \begin{aligned}
    \pi_m^{(s)}  &= \pi(s,m,\{u_{1}, u_{2}, \ldots, u_{n-m-1} \}) = \pi(S_1^{\textbf{P}}) \\
    \pi_m^{(s-1)} &= \pi(s-1,m,\{u_{1}-1, u_{2}-1, \ldots, u_{n-m-1}-1 \}) \\&= \pi(Q_1^{\textbf{P}}) \\
    \pi_{m-1}^{(s-1)} &= \pi(s-1,m-1,\{\Gamma-1,u_{1}-1, u_{2}-1, \ldots, u_{n-m-1}-1 \})
\end{aligned}   
\end{equation}
A state of type $(s,m,\{u_{1}, u_{2}, \ldots, u_{n-m-1} \})$ can be preceded by states with probabilities $\pi_m^{(s-1)}$ or $\pi_{m-1}^{(s-1)}$. Then, the value of $\pi_m^{(s)}$ can be calculated using the transition probabilities as:
\begin{equation}
    \begin{aligned}
\pi_m^{(s)} = &\pi_m^{(s-1)}(1-(m+1)\tau_1(1-\tau_1)^{m}\\&-(m+1)\tau_1\tau_2[(1-\tau_1\tau_2)^{m}-(1-\tau_1)^{m}]) \\+ &\pi_{m-1}^{(s-1)}(m+1)(1-m\tau_1(1-\tau_1)^{m-1}\\&-m\tau_1\tau_2[(1-\tau_1\tau_2)^{m-1}-(1-\tau_1)^{m-1}])
\end{aligned}
\end{equation}
Then, it is shown in (\ref{eq:lemma2long1}) that $\lim_{n \to \infty} \frac{\pi_m^{(s)}}{\pi_m^{(s-1)}}=1$ where $(a)$ follows from property $(iii)$.
\begin{equation} \label{eq:lemma2long1} \small
\vspace{0.5cm}
\begin{aligned}
    &\lim_{n \to \infty} \frac{\pi_m^{(s)}}{\pi_m^{(s-1)}} \\&= \lim_{n \to \infty} \frac{\pi_m^{(s-1)}(1-(m+1)\tau_1[(1-\tau_1)^m(1-\tau_2)+\tau_2(1-\tau_1\tau_2)^m]) }{\pi_m^{(s-1)}} \\
    &+\frac{\pi_{m-1}^{(s-1)}(m+1)(1-m\tau_1[(1-\tau_1)^{m-1}(1-\tau_2)+\tau_2(1-\tau_1\tau_2)^{m-1})}{\pi_m^{(s-1)}} \\
    &= \lim_{n \to \infty} 1-(m+1)\tau_1[(1-\tau_1)^m(1-\tau_2)+\tau_2(1-\tau_1\tau_2)^m]
    \\&+\frac{\pi_{m-1}^{(s-1)}}{\pi_m^{(s-1)}}(m+1)(1-m\tau_1[(1-\tau_1)^{m-1}(1-\tau_2)+\tau_2(1-\tau_1\tau_2)^{m-1}) \\
    &= \lim_{n \to \infty} 1-(\frac{m+1}{n}(n\tau_1))[(1-\tau_1)^m(1+\tau_2)-\tau_2(1-\tau_1\tau_2)^m] \\&+\frac{n\,\pi_{m-1}^{(s-1)}}{\pi_m^{(s-1)}}\frac{m+1}{n}(1-m\tau_1[(1-\tau_1)^{m-1}(1+\tau_2)-\tau_2(1-\tau_1\tau_2)^{m-1}]) \\
    &\stackrel{(a)}{=} \lim_{n \to \infty} 1-k\alpha e^{-k\alpha}-k\alpha\tau_2[e^{-\tau_2k\alpha}-e^{-k\alpha}] \\&+ \frac{1}{\frac{1}{\alpha{e^{-k\alpha}}+\alpha\tau_2(e^{-\tau_2k\alpha}-e^{-k\alpha})} - k}k(1-k\alpha e^{-k\alpha}[1+\tau_2e^{-\tau_2}-\tau_2]) = 1
\end{aligned}
\end{equation}
% where $(a)$ follows from property $(iii)$.
\\
\textbf{Case 2}. If $1 \in \{u_{1}, u_{2}, \ldots, u_{n-m-1} \}$, and without loss of generality we choose $u_{n-m-1}=1$ \\
Again for the sake of readability, the probability expressions are shortened in the following way:
\begin{equation}
  \begin{aligned}
    \pi_m^{(s)} &= \pi(s,m,\{u_{1}, u_{2}, \ldots, u_{n-m-2},1 \}) = \pi(S_1^{\textbf{P}}) \\
    \pi_m^{(s-1)} &= \pi(s-1,m,\{\Gamma-1,u_{1}-1, u_{2}-1, \ldots, u_{n-m-2}-1 \}) \\&= \pi(Q_1^{\textbf{P}}) \\
    \pi_{m+1}^{(s-1)} &= \pi(s-1,m+1,\{u_{1}-1, u_{2}-1, \ldots, u_{n-m-2}-1 \})
\end{aligned}  
\end{equation}
A state of type $(s,m,\{u_{1}, u_{2}, \ldots, u_{n-m-2},1 \})$ can be preceded by two types of states with steady state probabilities $\pi_m^{(s-1)}$ and $\pi_{m+1}^{(s-1)}$. Then, using the transition probabilities, the value of $\pi_m^{(s)}$ is obtained as:
\begin{equation}
\begin{aligned}
 \pi_m^{(s)} = &\pi_{m+1}^{(s-1)}(\tau_1(1-\tau_1)^{m}+\tau_1\tau_2[(1-\tau_1\tau_2)^{m}-(1-\tau_1)^{m}])\\ + &\pi_m^{(s-1)}(m\tau_1(1-\tau_1)^{m-1}-m\tau_1\tau_2[(1-\tau_1\tau_2)^{m-1}\\&-(1-\tau_1)^{m-1}])   
\end{aligned}
\end{equation}
Then, it is shown in (\ref{eq:lemma2long2}) that $\lim_{n \to \infty} \frac{\pi_m^{(s)}}{\pi_m^{(s-1)}}=1$.
\begin{equation} \label{eq:lemma2long2} \small
%\hspace{-0.2cm}
\begin{aligned}
    &\lim_{n \to \infty} \frac{\pi_m^{(s)}}{\pi_m^{(s-1)}} \\&= \lim_{n \to \infty} \frac{\pi_{m+1}^{(s-1)}\tau_1[(1-\tau_1)^{m}(1-\tau_2)+\tau_2(1-\tau_1\tau_2)^{m}]}{\pi_m^{(s-1)}} \\&+
    \frac{\pi_m^{(s-1)}(m\tau_1[(1-\tau_1)^{m-1}(1+\tau_2)-\tau_2(1-\tau_1\tau_2)^{m-1}])}{\pi_m^{(s-1)}} \\
    &= \lim_{n \to \infty} \frac{\pi_{m+1}^{(s-1)}}{\pi_m^{(s-1)}}(\tau_1(1-\tau_1)^{m}+\tau_1\tau_2[(1-\tau_1\tau_2)^{m}-(1-\tau_1)^{m}])\\&+(m\tau_1(1-\tau_1)^{m-1}-m\tau_1\tau_2[(1-\tau_1\tau_2)^{m-1}-(1-\tau_1)^{m-1}]) \\
    &= \lim_{n \to \infty}
    \frac{\pi_{m+1}^{(s-1)}}{n\pi_m^{(s-1)}}n\tau_1[(1-\tau_1)^{m}(1-\tau_2)+\tau_2(1-\tau_1\tau_2)^{m}]\\&+ (m\tau_1[(1-\tau_1)^{m-1}(1+\tau_2)-\tau_2(1-\tau_1\tau_2)^{m-1}]) \\
    &= \lim_{n \to \infty} (\frac{1}{\alpha e^{-k\alpha}[1+\tau_2e^{-\tau_2}-\tau_2]} - k)(\alpha e^{-k\alpha}[1+\tau_2e^{-\tau_2}-\tau_2]) \\&+(k\alpha e^{-k\alpha}-k\alpha\tau_2[e^{-\tau_2k\alpha}-e^{-k\alpha}]) = 1
\end{aligned}
\end{equation}
With the last case, we completed the proof of property $(i)$. Now, for the case $m_1=m_2$,  
\begin{equation}
\begin{aligned}
    \lim_{n \to \infty} \frac{\pi(S_1^{\textbf{P}})}{\pi(S_2^{\textbf{P}})} &= \lim_{n \to \infty} \frac{\pi(S_1^{\textbf{P}})}{\pi(Q_1^{\textbf{P}})}\frac{\pi(Q_2^{\textbf{P}})}{\pi(S_2^{\textbf{P}})}\frac{\pi(Q_1^{\textbf{P}})}{\pi(Q_2^{\textbf{P}})} \\
    &\stackrel{(a)}{=} \lim_{n \to \infty} \frac{\pi(Q_1^{\textbf{P}})}{\pi(Q_2^{\textbf{P}})} \stackrel{(b)}{=} 1
\end{aligned}
\end{equation}
Since the state of the pivot source for both the states $Q_1^{\textbf{P}}$ and $Q_2^{\textbf{P}}$ is $s-1$ and number of active sources is $m_1$ and $m_2$ respectively, (a) follows from property $(i)$ and (b) follows from property $(ii)$. \\
Similarly, for the case $m_1=m_2+1$,
\begin{equation}
\begin{aligned}
    \lim_{n \to \infty} \frac{\pi(S_1^{\textbf{P}})}{\pi(n\,S_2^{\textbf{P}})} &= \lim_{n \to \infty} \frac{\pi(S_1^{\textbf{P}})}{\pi(Q_1^{\textbf{P}})}\frac{\pi(Q_2^{\textbf{P}})}{\pi(S_2^{\textbf{P}})}\frac{\pi(Q_1^{\textbf{P}})}{n\,\pi(Q_2^{\textbf{P}})} \\
    &\stackrel{(a)}{=} \lim_{n \to \infty} \frac{\pi(Q_1^{\textbf{P}})}{n\,\pi(Q_2^{\textbf{P}})} \\
    &\stackrel{(b)}{=} \frac{1}{\alpha{e^{-k\alpha}}+\alpha\tau_2(e^{-\tau_2k\alpha}-e^{-k\alpha})} - k
\end{aligned}
\end{equation}
Since the state of the pivot source for both the states $Q_1^{\textbf{P}}$ and $Q_2^{\textbf{P}}$ is $s-1$ and number of active sources is $m_1$ and $m_2$ respectively, (a) follows from property $(i)$ and (b) follows from property $(iii)$.
\vspace{-0.5cm}
% use section* for acknowledgment
\section*{Acknowledgment}
The work in this paper has been funded in part by TUBITAK grants 117E215, and by Huawei.

% Can use something like this to put references on a page
% by themselves when using endfloat and the captionsoff option.
\ifCLASSOPTIONcaptionsoff
\newpage
\fi

% trigger a \newpage just before the given reference
% number - used to balance the columns on the last page
% adjust value as needed - may need to be readjusted if
% the document is modified later
%\IEEEtriggeratref{8}
% The "triggered" command can be changed if desired:
%\IEEEtriggercmd{\enlargethispage{-5in}}

% references section

% can use a bibliography generated by BibTeX as a .bbl file
% BibTeX documentation can be easily obtained at:
% http://mirror.ctan.org/biblio/bibtex/contrib/doc/
% The IEEEtran BibTeX style support page is at:
% http://www.michaelshell.org/tex/ieeetran/bibtex/
%\bibliographystyle{IEEEtran}
% argument is your BibTeX string definitions and bibliography database(s)
%\bibliography{IEEEabrv,../bib/paper}
%
% <OR> manually copy in the resultant .bbl file
% set second argument of \begin to the number of references
% (used to reserve space for the reference number labels box)
%\begin{thebibliography}{1}
	
%	\bibitem{IEEEhowto:kopka}
%	H.~Kopka and P.~W. Daly, \emph{A Guide to \LaTeX}, 3rd~ed.\hskip 1em plus
%	0.5em minus 0.4em\relax Harlow, England: Addison-Wesley, 1999.
	
%\end{thebibliography}
%\bibliographystyle{plain}
\bibliography{references} 
% biography section
% 
% If you have an EPS/PDF photo (graphicx package needed) extra braces are
% needed around the contents of the optional argument to biography to prevent
% the LaTeX parser from getting confused when it sees the complicated
% \includegraphics command within an optional argument. (You could create
% your own custom macro containing the \includegraphics command to make things
% simpler here.)
%\begin{IEEEbiography}[{\includegraphics[width=1in,height=1.25in,clip,keepaspectratio]{mshell}}]{Michael Shell}
% or if you just want to reserve a space for a photo:
%\iffalse
%\begin{IEEEbiography}{Michael Shell}
%	Biography text here.
%\end{IEEEbiography}

% if you will not have a photo at all:
%\begin{IEEEbiographynophoto}{John Doe}
%	Biography text here.
%\end{IEEEbiographynophoto}

% insert where needed to balance the two columns on the last page with
% biographies
%\newpage

%\begin{IEEEbiographynophoto}{Jane Doe}
%	Biography text here.
%\end{IEEEbiographynophoto}
%\fi

% You can push biographies down or up by placing
% a \vfill before or after them. The appropriate
% use of \vfill depends on what kind of text is
% on the last page and whether or not the columns
% are being equalized.

%\vfill

% Can be used to pull up biographies so that the bottom of the last one
% is flush with the other column.
%\enlargethispage{-5in}

% that's all folks
\end{document}